\DeclareMathOperator{\cwe}{cwe}
\DeclareMathOperator{\swe}{swe}
\DeclareMathOperator{\swc}{swc}
\DeclareMathOperator{\wtg}{wt_L}
\DeclareMathOperator{\wtgr}{Wt_L}
\DeclareMathOperator{\ord}{Ord}
\def\tilam{\phi_{S_1,S_2}}
\newtheorem{theorem}{Theorem}[]
\newtheorem{lemma}[theorem]{Lemma}
\newtheorem{corollary}[theorem]{Corollary}
\newtheorem{proposition}[theorem]{Proposition}
\theoremstyle{definition}
\newtheorem{example}[theorem]{Example}
\newtheorem{definition}[theorem]{Definition}
\newtheorem{alg}[theorem]{Algorithm}
\title{Codes over An Algebra over Ring}
\author{Irwansyah\\
{\it \small Department of Mathematics,}\\
{\it \small Faculty of Mathematics and Natural Sciences,}\\
{\it \small Universitas Mataram, Jl. Majapahit 62, Mataram, 83125}\\
{\it \small INDONESIA}\\
\vspace{0.1cm}\\
Djoko Suprijanto\\
{\it \small Combinatorial Mathematics Research Group,}\\
{\it \small Faculty of Mathematics and Natural Sciences,}\\
{\it \small Institut Teknologi Bandung,
Jl. Ganesha 10, Bandung, 40132,}\\
{\it \small INDONESIA}}
\date{}
\begin{document}
\maketitle
\begin{abstract}
In this paper, we consider some structures of linear codes over the ring $\mathcal{R}_k=R[v_1,\dots,v_k],$ where $v_i^2=v_i$ forall $i=1,\dots,k),$ and $R$ is a
finite commutative Frobenius ring.
\\[0.25cm]
\textbf{Keywords.} Commutative Frobenius ring, Gray map, Euclidean self-dual, Hermitian self-dual, MacWilliams relation, cyclic code, quasi-cyclic code,
skew-cyclic code, quasi-skew-cyclic code.
\\
\end{abstract}

\section{Introduction}

Some special cases of codes over the ring of the form $\mathcal{R}_k=R[v_1,\dots,v_k],$ where $v_i^2=v_i$ for all $i=1,\dots,k,$ and $R$ is a finite commutative
Frobenius ring
attract the attention of some researchers in coding theory. This is because codes over such kind of rings have a lot of nice structures. For example,
in \cite{abualrub,gao,irw}, they consider skew-cyclic codes over the ring $\mathbb{F}_2+v\mathbb{F}_2,\mathbb{F}_p+v\mathbb{F}_p$ and
$\mathbb{F}_{p^r}[v_1,\dots,v_k],$
respectively. Moreover, in \cite{dougherty-ceng,gao2,gao3}, they studied the structures of codes over $\mathbb{F}_{2}[v_1,\dots,v_k],\mathbb{Z}_4+v\mathbb{Z}_4,$ and
$\mathbb{Z}_9+v\mathbb{Z}_9,$ respectively, such as MacWilliams identity, self-dual codes, cyclic codes, constacyclic codes, {\it etc}. Also, we can find
a constructon of good and new $\mathbb{Z}_4$-linear codes in \cite{gao2}.

In this paper, we try to give general recipes for the structures of codes over such class of rings, including MacWilliams identities,
self-dual codes, cyclic codes, quasi-cyclic
codes, skew-cyclic codes, and quasi-skew-cyclic codes.

\section{Automorphisms and Gray Map}\label{grayandauto}

Let $R$ be a finite Frobenius ring and $\mathcal{R}_k=R[v_1,v_2,\dots,v_k],$ for some $k\in\mathbb{N},$ where $v_i^2=v_i,$ for all $i=1,2,\dots,k.$
The ring $\mathcal{R}_k$
can be viewed
as a free module over $R$ with dimension $2^k.$ Let $w_i=\{1-v_i,v_i\}$ and $w_S=\prod_{i\in S}w_i.$ Then, we have the following immediate properties.

\begin{lemma}
The ring $\mathcal{R}_k$ has the cardinality $|R|^{2^k}$ and characteristic equals to $char(R).$
\end{lemma}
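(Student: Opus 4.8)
The plan is to read off both assertions from the free $R$-module structure recorded just before the lemma, after making that structure fully explicit. First I would observe that, modulo the defining relations $v_i^2 = v_i$, every monomial in the commuting variables $v_1,\dots,v_k$ collapses to a squarefree monomial, so each element of $\mathcal{R}_k$ can be written as $\sum_{S\subseteq\{1,\dots,k\}} r_S\, u_S$, where $u_S=\prod_{i\in S}v_i$ (with $u_\emptyset=1$) and $r_S\in R$. I would then argue that the $2^k$ elements $\{u_S : S\subseteq\{1,\dots,k\}\}$ form an $R$-basis: spanning is clear from the reduction just described, and $R$-linear independence follows because $\mathcal{R}_k$ is, by construction, the quotient of the polynomial ring $R[x_1,\dots,x_k]$ by the ideal generated by the $x_i^2-x_i$, under which the squarefree monomials descend to a free generating set. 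This is precisely the content of the free-module claim already stated, so it may simply be invoked.

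For the cardinality, once the free $R$-basis $\{u_S\}$ is in hand, every element of $\mathcal{R}_k$ corresponds bijectively to a tuple $(r_S)_{S\subseteq\{1,\dots,k\}}$ of coefficients in $R$. Since there are $2^k$ subsets $S$ and $|R|$ independent choices for each coefficient, counting gives $|\mathcal{R}_k|=|R|^{2^k}$ at once.

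For the characteristic, I would use that $R$ sits inside $\mathcal{R}_k$ as the subring of constants (the summand indexed by $S=\emptyset$), and that both rings share the same multiplicative identity $1=u_\emptyset$. The characteristic of a unital ring is exactly the additive order of $1$; since $R$ is embedded as an $R$-submodule of $\mathcal{R}_k$, the relation $n\cdot 1=0$ holds in $\mathcal{R}_k$ if and only if it holds in $R$, so the additive order of $1$ is unchanged. Hence $char(\mathcal{R}_k)=char(R)$.

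The only step requiring genuine care is the $R$-linear independence of the $u_S$, that is, confirming that the relations $v_i^2=v_i$ introduce no further collapse among distinct squarefree monomials; everything else is bookkeeping. This is cleanest to handle by appealing to the quotient description of $\mathcal{R}_k$, which is exactly what underlies the free-module statement preceding the lemma, so in the write-up I expect the whole argument to reduce to two short counting observations plus the subring embedding.
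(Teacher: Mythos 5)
Your proposal is correct and follows essentially the same route as the paper: writing each element uniquely as $\sum_{S\subseteq\{1,\dots,k\}}\alpha_S v_S$ over the $2^k$ squarefree monomials and counting coefficient tuples to get $|\mathcal{R}_k|=|R|^{2^k}$. You are in fact more complete than the paper, which omits the linear-independence justification and does not address the characteristic claim at all, whereas your argument via the additive order of $1$ in the embedded copy of $R$ settles it.
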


\begin{proof}
As we can see, every element $\alpha\in\mathcal{R}_k$ can be written as
\[\alpha = \sum_{i=1}^{2^k}\alpha_{S_i}v_{S_i},\]
for some $\alpha_{S_i}\in R,$ for all $1\leq i\leq 2^k.$ Therefore we have that $|\mathcal{R}_k|=|R|^{2^k}.$
\end{proof}

Let $\Theta_i$ be a map on $\mathcal{R}_k$ such that
\[\Theta_i(\alpha)=\left\{\begin{array}{ll}
                    1-v_i, & \text{if}\;\alpha=v_i\\
                    \alpha, & \text{otherwise}.
                   \end{array}\right.\]
Then define
\[\Theta_S=\prod_{i\in S}\Theta_i=\Theta_{i_1}\circ \Theta_{i_2}\circ\cdots\circ\Theta_{i_{|S|}},\]
where $S\subseteq\{1,2,\dots,k\}.$ \\[0.25cm]
Also, let $S_1,S_2\subseteq\{1,2,\dots,k\},$ where $|S_1|=|S_2|,$ and $\phi_{S_1,S_2} : \{1,2,\dots,k\}\rightarrow \{1,2,\dots,k\}$ be a map such that
it is a bijection from $S_1$ to $S_2$ and $\phi_{S_1,S_2}(j)=j,$ for all $j\not\in S_1.$ Define a map $\Phi_{S_1,S_2},$
where
\[\Phi_{S_1,S_2}(\alpha v_j)= \vartheta(\alpha)v_{\phi_{S_1,S_2}(j)},\]
for some  automorphism $\vartheta$ of $R.$

We have to note that the maps $\Theta_S$ and $\Phi_{S_1,S_2}$ are automorphisms on the ring $\mathcal{R}_k,$ so does their compositions. In
this paper we consider automorphism $\theta$ as a composition of $\Theta_S$ or $\Phi_{S_1,S_2}$ or both.

Now, we will define two Gray maps from the ring $\mathcal{R}_k.$ {\it First}, For any $j\geq 1,$ any element $\alpha$ in $\mathcal{R}_j$ can be written as $\alpha=\alpha_1+\alpha_2 v_j,$ for some
$\alpha_1,\alpha_2\in \mathcal{R}_{j-1}.$ For some $l_j\geq 2$ in $\mathbb{N},$ define a map

\[\begin{array}{llll}
\varphi_{j} : & \mathcal{R}_j & \longrightarrow & \mathcal{R}_{j-1}^{l_j} \\
& \alpha_1+\alpha_2 v_j & \longmapsto & \left(\alpha_1,\beta_1\alpha_1+\beta_1'\alpha_2,\beta_2\alpha_1+\beta_2'\alpha_2,\dots,\beta_{l_j-1}\alpha_1+\beta_{l_j-1}'\alpha_2\right).
\end{array}
\]
where $\beta_i,\beta_i'$ are some elements in $\mathcal{R}_{j-1},$ for all $1\leq i\leq l_j,$ with $\beta_{l_{j-1}}'$ is a unit in $\mathcal{R}_{j-1}.$ 
The following lemma shows that $\varphi_j$ is an injective map and also a module homomorphism.

\begin{lemma}\label{lemmagray}
	The map $\varphi_j$ is an injective and also a $\mathcal{R}_{j-1}$-module homomorphism from $\mathcal{R}_j$ to $\mathcal{R}_{j-1}^{l_j},$ for all $1\leq j\leq k.$
\end{lemma}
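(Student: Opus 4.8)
The plan is to first record that $\mathcal{R}_j$ is a free $\mathcal{R}_{j-1}$-module of rank $2$ with basis $\{1,v_j\}$, so that the decomposition $\alpha=\alpha_1+\alpha_2 v_j$ with $\alpha_1,\alpha_2\in\mathcal{R}_{j-1}$ is unique; this guarantees $\varphi_j$ is well defined and lets me read off the coordinates of an image unambiguously. Observing that the first coordinate is $\alpha_1$ and the $(i+1)$-th coordinate is $\beta_i\alpha_1+\beta_i'\alpha_2$, I note that every coordinate is an $\mathcal{R}_{j-1}$-linear function of the pair $(\alpha_1,\alpha_2)$, since the coefficients $\beta_i,\beta_i'$ lie in the commutative ring $\mathcal{R}_{j-1}$.

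For the homomorphism property, I would take $r\in\mathcal{R}_{j-1}$ together with two elements $\alpha=\alpha_1+\alpha_2 v_j$ and $\gamma=\gamma_1+\gamma_2 v_j$, and use the free basis to write $r\alpha+\gamma=(r\alpha_1+\gamma_1)+(r\alpha_2+\gamma_2)v_j$. Applying $\varphi_j$ and comparing coordinate by coordinate, the first coordinate reads $r\alpha_1+\gamma_1$, while the $(i+1)$-th reads $\beta_i(r\alpha_1+\gamma_1)+\beta_i'(r\alpha_2+\gamma_2)$, which by distributivity and commutativity of $\mathcal{R}_{j-1}$ equals $r(\beta_i\alpha_1+\beta_i'\alpha_2)+(\beta_i\gamma_1+\beta_i'\gamma_2)$. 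This is precisely the corresponding coordinate of $r\varphi_j(\alpha)+\varphi_j(\gamma)$, so $\varphi_j$ is $\mathcal{R}_{j-1}$-linear.

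Since $\varphi_j$ is additive, injectivity reduces to showing the kernel is trivial. Suppose $\varphi_j(\alpha)=0$. The vanishing of the first coordinate gives $\alpha_1=0$ at once. Substituting this into the remaining coordinates leaves $\beta_i'\alpha_2=0$ for each $i$; here is the one place where the hypotheses genuinely come into play, namely the assumption that $\beta_{l_j-1}'$ is a unit of $\mathcal{R}_{j-1}$. Taking $i=l_j-1$ and multiplying by $(\beta_{l_j-1}')^{-1}$ forces $\alpha_2=0$, whence $\alpha=0$ and $\varphi_j$ is injective.

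I expect no genuine obstacle: the computation is formal once the free-module structure is in hand. The only point requiring care is the kernel step, where injectivity hinges entirely on having at least one unit among the coefficients multiplying $\alpha_2$ (the stated unit $\beta_{l_j-1}'$); without such a unit the map could collapse the $v_j$-component and fail to be injective, so this is the hypothesis I would be careful to invoke explicitly.
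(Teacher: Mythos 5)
Your proposal is correct and follows essentially the same route as the paper: the homomorphism property is the same coordinate-by-coordinate computation, and injectivity is obtained from the first coordinate (forcing $\alpha_1=0$) together with the unit $\beta_{l_j-1}'$ in the last coordinate (forcing $\alpha_2=0$), which is exactly the paper's argument phrased via a trivial kernel instead of comparing two preimages. Your explicit remark that $\{1,v_j\}$ is a free $\mathcal{R}_{j-1}$-basis, so the decomposition $\alpha=\alpha_1+\alpha_2v_j$ is unique and the map is well defined, is a small but worthwhile addition the paper leaves implicit.
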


\begin{proof}
	For {\it injectivity}, take any $\alpha$ and $\alpha'$ in $\mathcal{R}_j,$ where $\varphi_j(\alpha)=\varphi_j(\alpha').$ Now, let $\alpha=\alpha_1+\alpha_2v_j$ and $\alpha'=\alpha_1'+\alpha_2'v_j,$ for some $\alpha_1,\alpha_2,\alpha_1',$ and $\alpha_2'$ in $\mathcal{R}_{j-1}.$ Since $\varphi_j(\alpha)=\varphi_j(\alpha'),$ we have $\alpha_1=\alpha_1'.$ Using the previous fact and by considering the last coordinate of the images under $\varphi_j,$ we have $\beta_{l_j}'\alpha_2=\beta_{l_j}'\alpha_2'.$ Since $\beta_{l_j}'$ is a unit in $\mathcal{R}_{j-1},$ we also have $\alpha_2=\alpha_2'$ as we hope.\\
	
	Now, take any $\alpha$ and $\alpha'$ in $\mathcal{R}_j$ and any $\lambda$ in $\mathcal{R}_{j-1}.$ Let $\alpha=\alpha_1+\alpha_2 v_j$ and $\alpha'=\alpha_1'+\alpha_2' v_j,$ for some $\alpha_1,\alpha_2,\alpha_1'$ and $\alpha_2'$ in $\mathcal{R}_{j-1}.$ Consider 
	\[\begin{array}{lll}
		\varphi_j(\alpha+\alpha') & = & \left(\alpha_1+\alpha_1',\beta_1(\alpha_1+\alpha_1')+\beta_1'(\alpha_2+\alpha_2'),\beta_2(\alpha_1+\alpha_1')+\beta_2'(\alpha_2+\alpha_2'),\dots\right. \\
		&  & \left.\dots,\beta_{l_j-1}(\alpha_1+\alpha_1')+\beta_{l_j-1}'(\alpha_2+\alpha_2')\right) \\
		& = & \varphi_j(\alpha)+ \varphi_j(\alpha'),	
	\end{array}\]  
	and
	\[\begin{array}{lll}
		\varphi_j(\lambda\alpha) & = & \left(\lambda\alpha_1,\beta_1\lambda\alpha_1+\beta_1'\lambda\alpha_2,\beta_2\lambda\alpha_1+\beta_2'\lambda\alpha_2,\dots,\beta_{l_j-1}\lambda\alpha_1+\beta_{l_j-1}'\lambda\alpha_2\right)\\
		& = & \lambda\varphi_j(\alpha).
	\end{array}\]
	Therefore, the map $\varphi_j$ is a $\mathcal{R}_{j-1}$-module homomorphism for all $1\leq j\leq k.$
\end{proof}

 Note that, we can combine the maps $\varphi_j$ and $\varphi_{j-1}$ to get a map from $\mathcal{R}_j$ to $\mathcal{R}_{j-2}^{l_j\times l_{j-1}}$ as follows.
 
 \[\begin{array}{llll}
 \varphi_{j-1}\circ\varphi_j : & \mathcal{R}_j & \longrightarrow & \mathcal{R}_{j-2}^{l_j\times l_{j-1}}\\
  & \alpha_1+\alpha_2v_j & \longmapsto & \left(\varphi_{j-1}\left(\alpha_1\right),\varphi_{j-1}\left(\beta_1\alpha_1+\beta_1'\alpha_2\right),\varphi_{j-1}\left(\beta_2\alpha_1+\beta_2'\alpha_2\right),\dots\right.\\
  &  &  &\left.\dots,\varphi_{j-1}(\beta_{l_j-1}\alpha_1+\beta_{l_j-1}'\alpha_2)\right)
 \end{array}\]
 By doing this inductively, we will have a map $\varphi_1\circ \varphi_2\circ\cdots\circ\varphi_k$ from $\mathcal{R}_k$ to $R^{l_k\times l_{k-1}\times\cdots\times l_1}.$  
 
 We can extend the map $\varphi_j$ to get a map from $\mathcal{R}_j^n$ to $\mathcal{R}_{j-1}^{nl_j}$ by the following way,
 
 \[\begin{array}{llll}
 	\overline{\varphi}_j : & \mathcal{R}_j^n & \longrightarrow & \mathcal{R}_{j-1}^{nl_j}\\
 	 & (\alpha_{1,1}+\alpha_{1,2}v_j,\dots,\alpha_{n,1}+\alpha_{n,2}v_j) & \longmapsto & \left(\alpha_{1,1},\dots,\alpha_{n,1},\beta_1\alpha_{1,1}+\beta_1'\alpha_{1,2},\right.\\
 	 &&&\dots,\beta_1\alpha_{n,1}+\beta_1'\alpha_{n,2},\dots \\
 	 & & & \dots,\beta_{l_j-1}\alpha_{1,1}+\beta_{l_j-1}'\alpha_{1,2},\dots\\
 	 &&&\left.\dots,\beta_{l_j-1}\alpha_{n,1}+\beta_{l_j-1}'\alpha_{n,2}\right) 
 \end{array}\]
 
 We can combine $\overline{\varphi}_j$  and $\overline{\varphi}_{j-1}$  to get a map from $\mathcal{R}_j^n$ to $\mathcal{R}_{j-2}^{nl_jl_{j-1}},$ and inductively, to get a map from $\mathcal{R}_k^n$ to $R^{nl_k\cdots l_1}.$ The map $\varphi_j$ and its extensions are a generalization of Gray maps in \cite{dougherty-ceng,irw}. 
 
 For the {\it second} Gray map, any $\alpha$ in $\mathcal{R}_k$ can be written as $\alpha = \sum_{i=1}^{2^k}\alpha_{S_i}v_{S_i},$ for some $\alpha_{S_i}$ in $R,$ where $S_i\subseteq \{1,2,\dots,k\}$ and $v_{S_i}=\prod_{t\in S_i}v_t,$ for all $1\leq i\leq 2^k.$ Define a map $\Psi$ as follows.
 \[\begin{array}{llll}
 \Psi : & \mathcal{R}_k & \longrightarrow & R^{2^k}\\
 & \sum_{i=1}^{2^k}\alpha_{S_i}v_{S_i} & \longmapsto & \left(\sum_{S\subseteq S_1}\alpha_S,\dots,\sum_{S\subseteq S_{2^k}}\alpha_S\right)
 \end{array}
 \]
 We can check that the map $\Psi$ is a bijection map. Moreover, we can also check that the map $\Psi$ is an isomorphism, which
 implies
 \[\mathcal{R}_k\cong \underbrace{R\times R\times\cdots\times R}_{2^k}.\]
 This means $\mathcal{R}_k$ is also a Frobenius ring.
 
 Let $\overline{\Psi} : \mathcal{R}_k^n\rightarrow R^{2^k\times n}$ be a map such that
 \[\overline{\Psi}(a_1,\dots,a_n)=\left(\Psi(a_1),\dots,\Psi(a_n)\right).\]
 Then, we can see that $\overline{\Psi}$ is also a bijective map because $\Psi$ is bijective. Let $\Sigma_S$ and $\Gamma_{S_1,S_2}$ be two maps such that $\overline{\Psi}\circ \Theta_S=\Sigma_S\circ\overline{\Psi}$ and
 $\overline{\Psi}\circ\Phi_{S_1,S_2}=\Gamma_{S_1,S_2}\circ \overline{\Psi}.$ As we can see, the maps $\Sigma_S$ and $\Gamma_{S_1,S_2}$ are bijective maps induced
 by $\Theta_S$ and $\Phi_{S_1,S_2},$ respectively.

\section{Linear and Self-Dual Codes}

In this part, we will describe linear codes over $\mathcal{R}_k$ using the gray map defined in Section~\ref{grayandauto}. The following theorems describe the image of a linear code under the gray maps $\overline{\varphi}_j$ and $\overline{\Psi}.$ The following theorem describe the image of a linear code under the map $\overline{\varphi}_j.$

\begin{theorem}
	A code $C$ is a linear code of length $n$ over $\mathcal{R}_j$ if and only if the image $\overline{\varphi}_j(C)$ is a linear code of length $nl_j$ over $\mathcal{R}_{j-1}.$
\end{theorem}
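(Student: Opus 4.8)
The plan is to read both ``linear code'' claims as statements about submodules: a linear code of length $n$ over $\mathcal{R}_j$ is an $\mathcal{R}_j$-submodule of $\mathcal{R}_j^n$, and a linear code of length $nl_j$ over $\mathcal{R}_{j-1}$ is an $\mathcal{R}_{j-1}$-submodule of $\mathcal{R}_{j-1}^{nl_j}$, so the theorem becomes the assertion that $\overline{\varphi}_j$ transports the relevant submodule structure in both directions. First I would record that $\overline{\varphi}_j$ inherits from Lemma~\ref{lemmagray} the property of being an injective $\mathcal{R}_{j-1}$-module homomorphism: it is $\varphi_j$ applied coordinatewise, followed by a fixed permutation of the $nl_j$ positions, so additivity and $\mathcal{R}_{j-1}$-linearity pass to $\overline{\varphi}_j$, while injectivity of $\varphi_j$ forces injectivity of $\overline{\varphi}_j$.

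For the forward implication, if $C$ is an $\mathcal{R}_j$-submodule then, since $\mathcal{R}_{j-1}\subseteq\mathcal{R}_j$, it is a fortiori an $\mathcal{R}_{j-1}$-submodule, and the image of an $\mathcal{R}_{j-1}$-submodule under the $\mathcal{R}_{j-1}$-module homomorphism $\overline{\varphi}_j$ is an $\mathcal{R}_{j-1}$-submodule of $\mathcal{R}_{j-1}^{nl_j}$. Hence $\overline{\varphi}_j(C)$ is linear over $\mathcal{R}_{j-1}$, with nothing further to check.

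For the converse, assume $\overline{\varphi}_j(C)$ is an $\mathcal{R}_{j-1}$-submodule. Since $\overline{\varphi}_j$ is injective we have $C=\overline{\varphi}_j^{-1}\!\left(\overline{\varphi}_j(C)\right)$, and the preimage of a submodule under a module homomorphism is a submodule, so $C$ is at once an $\mathcal{R}_{j-1}$-submodule. The step I expect to be the main obstacle is promoting this to closure under multiplication by $v_j$, since $\mathcal{R}_j$ is generated over $\mathcal{R}_{j-1}$ by $1$ and $v_j$. For $c=(\alpha_{1,1}+\alpha_{1,2}v_j,\dots,\alpha_{n,1}+\alpha_{n,2}v_j)\in C$ one computes, using $v_j^2=v_j$, that $v_j c$ has entries $(\alpha_{i,1}+\alpha_{i,2})v_j$, so $\overline{\varphi}_j(v_j c)$ has vanishing first block and $t$-th block $\beta_t'(\alpha_{i,1}+\alpha_{i,2})$.

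To finish I would exploit the explicit shape of the map: the first block of $\overline{\varphi}_j(c)$ returns the tuple $(\alpha_{i,1})_i$, and because $\beta_{l_j-1}'$ is a unit the last block returns $(\alpha_{i,2})_i$ once the known $\beta_{l_j-1}(\alpha_{i,1})_i$ contribution is removed. The genuine difficulty is that recovering $(\alpha_{i,2})_i$ in this way combines distinct coordinate blocks of $\overline{\varphi}_j(c)$, which is not an $\mathcal{R}_{j-1}$-module operation on $\mathcal{R}_{j-1}^{nl_j}$ and therefore need not keep us inside the submodule $\overline{\varphi}_j(C)$; so closure under $v_j$ does not obviously follow from $\mathcal{R}_{j-1}$-linearity alone. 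I would consequently concentrate the effort here, isolating exactly which compatibility condition on the coefficients $\beta_t,\beta_t'$ makes $v_j$-multiplication correspond to an $\mathcal{R}_{j-1}$-linear endomorphism preserving $\overline{\varphi}_j(C)$, and only once that is secured would I regard the equivalence as fully established.
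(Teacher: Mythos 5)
The paper states this theorem without any proof, so there is nothing to compare your argument against; I will assess it on its own terms. Your forward direction is correct and complete: since $\overline{\varphi}_j$ is $\varphi_j$ applied coordinatewise followed by a permutation of the $nl_j$ positions, it inherits injectivity and $\mathcal{R}_{j-1}$-linearity from Lemma~\ref{lemmagray}, and the image of an $\mathcal{R}_j$-submodule (hence in particular an $\mathcal{R}_{j-1}$-submodule) under an $\mathcal{R}_{j-1}$-module homomorphism is an $\mathcal{R}_{j-1}$-submodule of $\mathcal{R}_{j-1}^{nl_j}$. Your treatment of the converse correctly gets as far as showing that $C$ is an $\mathcal{R}_{j-1}$-submodule of $\mathcal{R}_j^n$, and you are right to stop there: the remaining step, closure of $C$ under multiplication by $v_j$, is not merely hard to extract from $\mathcal{R}_{j-1}$-linearity of the image --- it genuinely does not follow, for any choice of the coefficients $\beta_t,\beta_t'$.

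Here is a concrete counterexample showing the ``if'' direction is false when ``linear over $\mathcal{R}_j$'' means ``$\mathcal{R}_j$-submodule.'' Take $R=\mathbb{F}_2$, $j=1$, $n=1$, $l_1=2$, $\beta_1=0$ and $\beta_1'=1$ (a unit), so that $\varphi_1(\alpha_1+\alpha_2v_1)=(\alpha_1,\alpha_2)$. Let $C=\{0,1\}\subset\mathcal{R}_1$. Then $\overline{\varphi}_1(C)=\{(0,0),(1,0)\}$ is a perfectly good linear code of length $2$ over $\mathcal{R}_0=\mathbb{F}_2$, yet $v_1\cdot 1=v_1\notin C$, so $C$ is not an $\mathcal{R}_1$-submodule of $\mathcal{R}_1$. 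The structural reason is the one you put your finger on: $\overline{\varphi}_j$ is only an $\mathcal{R}_{j-1}$-module map, so the lattice of $\mathcal{R}_{j-1}$-submodules of its image cannot detect the extra $v_j$-action; recovering $(\alpha_{i,2})_i$ requires mixing coordinate blocks, which is not an operation that an arbitrary $\mathcal{R}_{j-1}$-submodule of $\mathcal{R}_{j-1}^{nl_j}$ is closed under. So the theorem as printed is only true in the forward direction (or with ``linear over $\mathcal{R}_j$'' weakened to $\mathcal{R}_{j-1}$-linearity, i.e.\ $R$-linearity after full iteration); your proposal is not at fault for failing to close this gap, but it cannot be completed into a proof of the stated equivalence without modifying the statement. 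Note the contrast with Theorem~\ref{linearpsi}, where the paper correctly requires the image under $\overline{\Psi}$ to be a \emph{product} of linear codes rather than an arbitrary linear code; an analogous strengthening of the hypothesis on $\overline{\varphi}_j(C)$ is what this theorem needs.
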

We have the following consequence.

\begin{corollary}
A code $C$ is a linear code of length $n$ over $\mathcal{R}_k$ if and only if the code
\[\overline{\varphi}_1\circ\overline{\varphi}_2\circ\cdots\circ\overline{\varphi}_k(C)\]
is a linear code of length $nl_1\cdots l_k$ over $R.$
\end{corollary}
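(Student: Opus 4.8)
The plan is to deduce the corollary from the preceding theorem by a finite induction, using that theorem as the single-step engine. The theorem provides, for every index $j$ with $1\le j\le k$, the biconditional that a code is linear over $\mathcal{R}_j$ of length $m$ precisely when its image under $\overline{\varphi}_j$ is linear over $\mathcal{R}_{j-1}$ of length $ml_j$. Because this is an \emph{if and only if} statement, each application is reversible, so chaining the individual equivalences for $j=k,k-1,\dots,1$ will yield a single biconditional for the whole composition $\overline{\varphi}_1\circ\cdots\circ\overline{\varphi}_k$ without losing any implication.

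Concretely, set $C_k=C$ and, for $0\le j\le k-1$, write $C_j=\overline{\varphi}_{j+1}\circ\cdots\circ\overline{\varphi}_k(C)$, so that $C_{j-1}=\overline{\varphi}_j(C_j)$. I would prove by downward induction on $j$ the statement
\[
C \text{ is linear over } \mathcal{R}_k \iff C_j \text{ is linear over } \mathcal{R}_j \text{ of length } n\prod_{i=j+1}^{k} l_i.
\]
The base case $j=k$ is trivial since $C_k=C$ and the empty product equals $1$. For the inductive step, suppose the equivalence holds at level $j$; applying the theorem to the code $C_j$ over $\mathcal{R}_j$ shows that $C_j$ is linear over $\mathcal{R}_j$ exactly when $C_{j-1}=\overline{\varphi}_j(C_j)$ is linear over $\mathcal{R}_{j-1}$, and the length is multiplied by $l_j$, giving $n\prod_{i=j}^{k} l_i$. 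Composing this with the inductive hypothesis lowers the index to $j-1$. Carrying the induction down to $j=0$ and identifying $\mathcal{R}_0$ with $R$ yields that $C$ is linear over $\mathcal{R}_k$ if and only if $C_0=\overline{\varphi}_1\circ\cdots\circ\overline{\varphi}_k(C)$ is linear over $R$ of length $n\prod_{i=1}^{k} l_i=nl_1\cdots l_k$, which is exactly the corollary.

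I expect no genuine obstacle, as all the analytic content lives in the theorem; the corollary is purely a matter of iteration. The only point requiring care is the bookkeeping: at each stage one must confirm that the intermediate image $C_{j-1}$ is certified to be an honest linear code over the correct ring $\mathcal{R}_{j-1}$, which is precisely the output guaranteed by the theorem and the precondition needed to invoke it again at the next level, and that the length is scaled by exactly the factor $l_j$ so that the telescoping product $l_k l_{k-1}\cdots l_1$ comes out as claimed.
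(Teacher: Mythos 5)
Your proposal is correct and follows exactly the route the paper intends: the corollary is stated without proof as an immediate consequence of the preceding theorem, obtained by iterating that biconditional for $j=k,k-1,\dots,1$ and tracking the length factor $l_j$ at each step. Your downward induction with the explicit intermediate codes $C_j$ is just a careful write-up of that same iteration, so there is no substantive difference.
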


The following theorem describe the image of a linear code under the map $\overline{\Psi}.$

\begin{theorem}\label{linearpsi}
		A code $C$ is a linear code of length $n$ over $\mathcal{R}_k$ if and only if there exist linear codes, $C_1,C_2,\dots,C_{2^k},$ of length $n$ over $R$ such that $C=\overline{\Psi}^{-1}(C_1,C_2,\dots,C_{2^k}).$
\end{theorem}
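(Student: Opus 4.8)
The plan is to exploit the ring isomorphism $\Psi : \mathcal{R}_k \to R^{2^k}$ established just above the theorem, which identifies $\mathcal{R}_k$ with the product ring $R \times \cdots \times R$ ($2^k$ copies). Under this identification, $\overline{\Psi}$ becomes an $\mathcal{R}_k$-module isomorphism from $\mathcal{R}_k^n$ onto $(R^{2^k})^n$, and the whole statement reduces to the well-known structure of submodules of a module over a finite product of rings. First I would upgrade the given bijection $\overline{\Psi}$ to a module isomorphism: since $\Psi$ is a ring isomorphism it is additive and multiplicative, so $\overline{\Psi}$ is additive and satisfies $\overline{\Psi}(a x) = \Psi(a)\,\overline{\Psi}(x)$ for the diagonal scalar action of $\mathcal{R}_k$. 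Consequently, $C \subseteq \mathcal{R}_k^n$ is an $\mathcal{R}_k$-submodule (i.e. a linear code) if and only if $\overline{\Psi}(C)$ is an $R^{2^k}$-submodule of $(R^{2^k})^n$.

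The decisive step is then the idempotent decomposition of the product ring. I would reorganize coordinates so that $(R^{2^k})^n \cong (R^n)^{2^k}$, grouping the $j$-th component of all $n$ entries together; the image $\overline{\Psi}(C)$ becomes a submodule $M$ of $(R^n)^{2^k}$ over $R^{2^k}$, where the $j$-th factor $R$ acts only on the $j$-th copy of $R^n$. Let $e_1,\dots,e_{2^k}$ be the orthogonal idempotents of $R^{2^k}$ with $\sum_j e_j = 1$ and $e_i e_j = 0$ for $i \neq j$. Setting $C_j = e_j M$, each $C_j$ is an $R$-submodule of the $j$-th copy of $R^n$, hence a linear code of length $n$ over $R$, and $M = C_1 \times \cdots \times C_{2^k}$ because every $x \in M$ equals $\sum_j e_j x$ with $e_j x \in C_j$. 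Pulling back through $\overline{\Psi}^{-1}$ yields $C = \overline{\Psi}^{-1}(C_1,\dots,C_{2^k})$. The converse is immediate: for any linear codes $C_1,\dots,C_{2^k}$ over $R$, the product $C_1 \times \cdots \times C_{2^k}$ is visibly an $R^{2^k}$-submodule, so its preimage under the module isomorphism $\overline{\Psi}$ is an $\mathcal{R}_k$-submodule, i.e. a linear code over $\mathcal{R}_k$.

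The main obstacle I anticipate is purely organizational rather than conceptual: one must carefully match the coordinate layout of $R^{2^k \times n}$ produced by the definition $\overline{\Psi}(a_1,\dots,a_n) = (\Psi(a_1),\dots,\Psi(a_n))$ with the grouped form $(R^n)^{2^k}$ required by the idempotent argument, and verify that ``$\mathcal{R}_k$-linear'' transports to ``$R$-linear in each of the $2^k$ slots'' rather than to some mixed condition coupling the slots. Once the product-ring structure is made explicit via $\Psi$, the equivalence is forced by the orthogonality of the $e_j$, and no further computation is needed.
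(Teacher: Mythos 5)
Your argument is correct. Note, however, that the paper itself gives no proof of this theorem: it simply says ``Similar to the proof of [Lemma 16, irw]'' and moves on. What you have written is a complete, self-contained justification, and it proceeds by what is almost certainly the intended route: use the ring isomorphism $\Psi:\mathcal{R}_k\to R^{2^k}$ to transport the problem to the product ring, then decompose any submodule of $(R^n)^{2^k}$ via the orthogonal idempotents $e_1,\dots,e_{2^k}$ (which correspond back in $\mathcal{R}_k$ to the idempotents $\prod_{i\in S}v_i\prod_{i\notin S}(1-v_i)$). Your two directions are both sound: $M=\sum_j e_jM=C_1\times\cdots\times C_{2^k}$ uses orthogonality and closure of $M$ under addition and the $e_j$-action, and the converse is immediate since a product of $R$-submodules is an $R^{2^k}$-submodule. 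The one point you flag yourself --- matching the coordinate layout $R^{2^k\times n}$ of $\overline{\Psi}$ with the grouped form $(R^n)^{2^k}$ --- is indeed the only place requiring care, since the paper's notation $\overline{\Psi}^{-1}(C_1,\dots,C_{2^k})$ tacitly assumes that regrouping; once made explicit, nothing else is missing. In short, your proof supplies the detail the paper outsources, using the standard decomposition that the cited lemma in the reference also relies on.
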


\begin{proof}
	Similar to the proof of \cite[Lemma 16]{irw}.
\end{proof}

Now, we will describe Euclidean and Hermitian self-dual codes. Let $\Theta_S$ be an automorphism in the ring $\mathcal{R}_k$ as in Section~\ref{grayandauto}, where $S=\{1,2,\dots,k\}.$
For any $\mathbf{c}=(c_1,\dots,c_n)$ and $\mathbf{c}'=(c_1',\dots,c_n')$ in $\mathcal{R}_k^n,$
define the Hermitian product as follows,
\[[\mathbf{c},\mathbf{c}']=\sum_{i=1}^n c_i\overline{c_i'}=\sum_{i=1}^nc_i\Theta_S(c_i').\]
Let
$C^H=\{\mathbf{c}'|[\mathbf{c},\mathbf{c}']=0\;\forall \mathbf{c}\in C\},$ then a code $C$ is called {\it Hermitian self-orthogonal} if
$C\subseteq C^H,$ and $C$ is called {\it Hermitian self-dual} if $C=C^H.$ Also, for any $\mathbf{c}=(c_1,\dots,c_n)$ and $\mathbf{c}'=(c_1',\dots,c_n'),$ define the Euclidean product as the following rational sum,
\[\mathbf{c}\cdot \mathbf{c}'=\sum_{i=1}^nc_ic_i'.\]
Let $C^\bot=\{\mathbf{c}'|\mathbf{c}\cdot \mathbf{c}'=0\;\forall \mathbf{c}\in C\},$ then a code $C$ is called {\it self-orthogonal} if
$C\subseteq C^\bot,$ and $C$ is called {\it Euclidean self-dual} if $C=C^\bot.$ 
The following theorem describe the existence of Hermitian self-dual codes over
$\mathcal{R}_k.$

\begin{theorem}
If $S\not=\emptyset,$ then there exist Hermitian self-dual codes over $\mathcal{R}_k$ for all length.
\label{hermitianconstruct}
\end{theorem}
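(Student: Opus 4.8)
The plan is to reduce the claim to the construction of a single Hermitian self-dual code of length $1$, and then to obtain all lengths by taking direct sums (repeated coordinates). First I would make the ring decomposition concrete: for each $T\subseteq\{1,\dots,k\}$ set $e_T=\prod_{i\in T}v_i\prod_{j\notin T}(1-v_j)$. These are $2^k$ pairwise orthogonal idempotents summing to $1$, so every $c\in\mathcal{R}_k$ is uniquely $c=\sum_T a_T e_T$ with $a_T\in R$; this realizes the isomorphism $\mathcal{R}_k\cong\prod_T R$ underlying the map $\Psi$. The key computation is the action of the conjugation on these idempotents: since $\Theta_i$ fixes $R$ and each $v_j$ ($j\neq i$) and sends $v_i\mapsto 1-v_i$, one checks $\Theta_i(e_T)=e_{T\triangle\{i\}}$, and hence, for $S=\{1,\dots,k\}$, $\Theta_S(e_T)=e_{T^c}$, where $T^c$ is the complement. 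Thus the Hermitian conjugation is exactly the coordinate permutation induced by complementation of index sets (the permutation $\Sigma_S$ from Section~\ref{grayandauto}).

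Next, because $S\neq\emptyset$ (equivalently $k\geq 1$), complementation $T\mapsto T^c$ is a fixed-point-free involution on the $2^k$ subsets, so it partitions them into $2^{k-1}$ unordered pairs $\{T,T^c\}$. I would fix a transversal $\mathcal{T}$ containing exactly one member of each pair and set $C=\bigoplus_{T\in\mathcal{T}}Re_T$, a free $R$-submodule of $\mathcal{R}_k$ of rank $2^{k-1}$, viewed as a code of length $1$. Writing $c=\sum_T a_Te_T$ and $c'=\sum_T a'_Te_T$ and using $e_Te_U=\delta_{T,U}e_T$ together with $\Theta_S(c')=\sum_T a'_{T^c}e_T$, one gets $c\,\Theta_S(c')=\sum_T a_T a'_{T^c}e_T$. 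For $c,c'\in C$ the coefficients $a_T$ and $a'_{T^c}$ cannot be simultaneously nonzero (if $T\in\mathcal{T}$ then $T^c\notin\mathcal{T}$, so $a'_{T^c}=0$), giving $[\mathbf{c},\mathbf{c}']=0$ and hence $C\subseteq C^H$. Conversely, if $c'\in C^H$ then $a_Ta'_{T^c}=0$ for all $T\in\mathcal{T}$ and all choices of $a_T$, forcing $a'_U=0$ for every $U\in\{T^c:T\in\mathcal{T}\}$, which is precisely the complement of $\mathcal{T}$; thus $c'$ is supported on $\mathcal{T}$, i.e. $c'\in C$. Therefore $C=C^H$.

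Finally, for arbitrary length $n$ I would take the $n$-fold repetition $C^{(n)}=\{(c_1,\dots,c_n):c_i\in C\}$. Since the Hermitian form splits coordinatewise, a vector lies in $(C^{(n)})^H$ if and only if each of its coordinates lies in $C^H=C$, so $(C^{(n)})^H=C^{(n)}$ and $C^{(n)}$ is a Hermitian self-dual code of length $n$. This produces Hermitian self-dual codes over $\mathcal{R}_k$ of every length.

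The main obstacle is not the length-$n$ bootstrap but the length-$1$ base case: one must verify the idempotent identity $\Theta_S(e_T)=e_{T^c}$ correctly and, crucially, that complementation has no fixed points. This last point is exactly where the hypothesis $S\neq\emptyset$ is used, for if $S=\emptyset$ then $\Theta_S$ is the identity, the form reduces to the Euclidean one, each $T$ is its own ``pair'', and the self-orthogonal transversal construction collapses (since $e_Te_T=e_T\neq 0$), so no such uniform construction is available. An alternative to proving the reverse inclusion $C^H\subseteq C$ by hand would be to invoke the Frobenius cardinality identity $|C|\,|C^H|=|\mathcal{R}_k|^n$ (valid because $\mathcal{R}_k$ is Frobenius and $\Theta_S$ is an automorphism, whence $|C^H|=|C^\perp|$) and combine it with $C\subseteq C^H$; I would keep the direct argument since it is self-contained.
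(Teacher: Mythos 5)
Your proof is correct, and its overall strategy is the same as the paper's: build a Hermitian self-dual code of length $1$ and then take the $n$-fold product, using the fact that the Hermitian form splits coordinatewise. The difference is in the base case. The paper simply takes $C_1=\langle v_i\rangle$ for some $i\in S$ and justifies $C_1^H=C_1$ by the single identity $v_i(1-v_i)=0$; note that this identity by itself only gives the inclusion $C_1\subseteq C_1^H$, and the reverse inclusion is left implicit. You instead make the idempotent decomposition $e_T=\prod_{i\in T}v_i\prod_{j\notin T}(1-v_j)$ explicit, show that $\Theta_S$ permutes the $e_T$ by $T\mapsto T\triangle S$ (complementation when $S=\{1,\dots,k\}$), and take $C=\bigoplus_{T\in\mathcal{T}}Re_T$ for a transversal $\mathcal{T}$ of the resulting fixed-point-free involution, verifying both inclusions $C\subseteq C^H$ and $C^H\subseteq C$. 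Since $v_i=\sum_{T\ni i}e_T$, the paper's code $\langle v_i\rangle$ is exactly your construction with the transversal $\mathcal{T}=\{T: i\in T\}$, so your argument subsumes the paper's, produces a whole family of such codes (one per transversal), isolates precisely where $S\neq\emptyset$ is used (fixed-point-freeness of $T\mapsto T\triangle S$), and supplies the reverse inclusion that the paper glosses over.
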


\begin{proof}
 Take $i$ in $S.$ Let $C_1=\langle v_i\rangle,$ then we have $C_1^H=\langle v_i\rangle=C_1,$ because $v_i(1-v_i)=0.$ So, Hermitian self-dual code of length $1$ over
 $\mathcal{R}_k$ exist. Now, for any length $n,$ define
 \[C=\underbrace{C_1\times C_1\times\cdots\times C_1}_{n}.\]
 As we can see, $C^H=C,$ which means $C$ is an Hermitian self-dual code of length $n.$
\end{proof}

 Note that, the ring $\mathcal{R}_k$ can be written as $\mathcal{R}_k=v_k\mathcal{R}_{k-1}+(1-v_k)\mathcal{R}_{k-1}.$ Consequently, any code $C$ of length $n$ over $\mathcal{R}_{k}$ can be written as $C=v_k C_1+(1-v_k)C_2,$ where $C_1$ and $C_2$ are codes of length $n$ over $\mathcal{R}_{k-1}.$ 
 
\begin{proposition}
If $C$ is a Hermitian self-dual code of length $n$ over $\mathcal{R}_1,$ then $C$ is isomorphic to $C_1\times C_1^{\bot},$ where $C_1$ is a code
of length $n$ over $R.$
\label{hermitian}
\end{proposition}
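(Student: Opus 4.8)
The plan is to transport the entire statement across the ring isomorphism $\Psi$ of Section~\ref{grayandauto}, under which $\mathcal{R}_1\cong R\times R$, and to read off the Hermitian duality as an ordinary Euclidean duality on the two $R$-factors. First I would record $\Psi$ explicitly on $\mathcal{R}_1$: writing an element as $a+bv_1$ with $a,b\in R$ gives $\Psi(a+bv_1)=(a,a+b)$. Since the conjugation $\Theta_S$ (here $S=\{1\}$) is the automorphism sending $v_1\mapsto 1-v_1$, one has $\overline{a+bv_1}=(a+b)-bv_1$, and therefore $\Psi(\overline{a+bv_1})=(a+b,a)$. Comparing with $\Psi(a+bv_1)=(a,a+b)$, this shows that under $\Psi$ conjugation is exactly the coordinate swap $(p,q)\mapsto(q,p)$, i.e. the induced map $\Sigma_S$ of Section~\ref{grayandauto}.

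Next I would compute how the Hermitian product behaves under $\overline{\Psi}$. Because $\Psi$ is a ring isomorphism it carries products to componentwise products, so writing $\Psi(c_i)=(p_i,q_i)$ and $\Psi(c_i')=(p_i',q_i')$ yields
\[\Psi\!\left(\sum_{i=1}^n c_i\,\overline{c_i'}\right)=\left(\sum_{i=1}^n p_i q_i',\;\sum_{i=1}^n q_i p_i'\right)=(\mathbf{p}\cdot\mathbf{q}',\;\mathbf{q}\cdot\mathbf{p}'),\]
where $\mathbf{p}=(p_1,\dots,p_n)$, and similarly for $\mathbf{q},\mathbf{p}',\mathbf{q}'$. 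Hence $[\mathbf{c},\mathbf{c}']=0$ in $\mathcal{R}_1$ is equivalent to the two Euclidean orthogonality conditions $\mathbf{p}\cdot\mathbf{q}'=0$ and $\mathbf{q}\cdot\mathbf{p}'=0$ over $R$.

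Now by Theorem~\ref{linearpsi} I may write $C=\overline{\Psi}^{-1}(C_1,C_2)$ for linear codes $C_1,C_2$ of length $n$ over $R$, so that as $\mathbf{c}$ ranges over $C$ the vector $\mathbf{p}$ runs over all of $C_1$ and $\mathbf{q}$ over all of $C_2$, independently. Letting $\mathbf{c}'$ range over $C^H$ and invoking the two conditions above, $\mathbf{q}'$ must be orthogonal to every element of $C_1$ and $\mathbf{p}'$ to every element of $C_2$; thus $C^H=\overline{\Psi}^{-1}(C_2^\bot,C_1^\bot)$. Imposing $C=C^H$ forces $C_1=C_2^\bot$ together with $C_2=C_1^\bot$, and by the Frobenius double-dual identity $(C_1^\bot)^\bot=C_1$ these collapse to the single requirement $C_2=C_1^\bot$. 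Consequently $C=\overline{\Psi}^{-1}(C_1,C_1^\bot)$, and since $\overline{\Psi}$ is an $R$-linear bijection (with $R$ embedded diagonally) it exhibits $C\cong C_1\times C_1^\bot$, as claimed.

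The coordinate computation for $\Psi$ and the verification that products pass to componentwise products are routine. The step deserving the most care is the bookkeeping when transporting the Hermitian form through $\Psi$: one must place the conjugate on the correct factor so that the swap sends the two components of $\Psi([\mathbf{c},\mathbf{c}'])$ to the \emph{crossed} pairings $\mathbf{p}\cdot\mathbf{q}'$ and $\mathbf{q}\cdot\mathbf{p}'$, which is precisely what turns $C=C^H$ into the symmetric pair $C_2=C_1^\bot$. The only nonelementary ingredient is the double-dual identity $(C_1^\bot)^\bot=C_1$, and that is exactly where the Frobenius hypothesis on $R$ enters.
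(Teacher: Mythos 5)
Your proof is correct and takes essentially the same approach as the paper's: the paper writes $C=vC_1+(1-v)C_2$ and expands the Hermitian product directly in terms of the idempotents $v$ and $1-v$, obtaining $[\mathbf{c},\mathbf{c}']=v\sum_i c_{1i}c_{2i}'+(1-v)\sum_i c_{2i}c_{1i}'$, which is exactly your computation through $\Psi$ in CRT coordinates (conjugation swapping the two factors so that only the crossed pairings survive). If anything your write-up is slightly more careful, since you identify $C^H=\overline{\Psi}^{-1}(C_2^\bot,C_1^\bot)$ explicitly and flag where the Frobenius double-dual identity $(C_1^\bot)^\bot=C_1$ is needed to collapse the two conditions to one.
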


\begin{proof}
Remember that $C$ can be written as $C=vC_1+(1-v)C_2,$ where $C_1$ and $C_2$ are codes of length $n$ over $R.$ Consider
\begin{equation}\label{hermiteq}
\begin{array}{lll}
[\mathbf{c},\mathbf{c}'] & = & \sum_i c_i\overline{c_i'}\\
 & = & \sum_i \left(vc_{1i} +(1-v) c_{2i}\right)\overline{\left(vc_{1i}' +(1-v) c_{2i}'\right)}\\
 & = & \sum_i \left(vc_{1i} +(1-v) c_{2i}\right)\left((1-v)c_{1i}' + vc_{2i}'\right)\\
 & = & v\sum_i c_{1i}c_{2i}' + (1-v)\sum_i c_{2i}c_{1i}',
\end{array}
\end{equation}
where $(c_{j1},c_{j2},\dots,c_{jn})$ is in $C_j,$ for $j=1,2.$ If the equation~\ref{hermiteq} is equal to $0,$ then it requires $\sum_i c_{1i}c_{2i}'=0$ and $\sum_i c_{2i}c_{1i}'=0.$ Since $C$ is self dual, we have $C_1=C_2^\bot$ and $C_2=C_1^\bot.$ Therefore, $C$ is isomorphic to $C_1\times C_1^\bot.$
\end{proof}

Using the above property, we have the following theorem.

\begin{theorem}
If $C$ is a Hermitian self-dual code of length $n$ over $\mathcal{R}_k,$ then, with proper arrangement of indices, $C$ is isomorphic to
\[C_1\times C_1^\bot\times\cdots\times C_{2^{k-1}}\times C_{2^{k-1}}^\bot,\]
where $C_1,\dots,C_{2^{k-1}}$ are codes of length $n$ over $R.$
\label{hermitian2}
\end{theorem}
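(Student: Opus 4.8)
The plan is to reduce the Hermitian structure over $\mathcal{R}_k$ to a purely Euclidean one over $R$ by passing through the ring isomorphism $\Psi:\mathcal{R}_k\cong R^{2^k}$ constructed in Section~\ref{grayandauto}. Concretely, I would work with the orthogonal idempotents $e_T=\prod_{i\in T}v_i\prod_{j\notin T}(1-v_j)$, $T\subseteq\{1,\dots,k\}$, which satisfy $\sum_T e_T=1$ and $e_Te_{T'}=0$ for $T\ne T'$, and which realize the decomposition $\mathcal{R}_k\cong\prod_T Re_T$; one checks that the $T$-coordinate of $\Psi$ is exactly the coefficient of $e_T$. By Theorem~\ref{linearpsi}, a linear code $C$ of length $n$ then corresponds to a family of linear codes $\{C_T\}_{T\subseteq\{1,\dots,k\}}$ over $R$, and $C\cong\prod_T C_T$.

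The key computation is to track what the conjugation $\overline{(\,\cdot\,)}=\Theta_S$ with $S=\{1,\dots,k\}$ does on this decomposition. Since $\Theta_S$ is the $R$-algebra automorphism sending every $v_i$ to $1-v_i$ (and fixing $R$ pointwise), it sends $e_T$ to $e_{T^c}$, where $T^c=\{1,\dots,k\}\setminus T$. Hence the induced map $\Sigma_S$ on $R^{2^k}$ merely permutes the $2^k$ coordinates by complementation $T\mapsto T^c$ and acts as the identity inside each factor $R$. Writing $\mathbf{c}=(c_i)$ and $\mathbf{c}'=(c_i')$ in coordinates $c_i=(c_{i,T})_T$, this means that the $T$-component of $[\mathbf{c},\mathbf{c}']=\sum_i c_i\Theta_S(c_i')$ equals the ordinary Euclidean pairing $\sum_i c_{i,T}\,c_{i,T^c}'$ over $R$. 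It is precisely because $\Theta_S$ is trivial on $R$ that no twisted pairing survives on the factors, and we obtain the Euclidean dual $C_T^\bot$ in the conclusion.

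From here the proof finishes formally. Reading off the $T$-component shows that $\mathbf{c}'\in C^H$ if and only if the $T$-component of $\mathbf{c}'$ lies in $C_{T^c}^\bot$ for every $T$; that is, $C^H\cong\prod_T C_{T^c}^\bot$. Imposing $C=C^H$ therefore forces $C_T=C_{T^c}^\bot$ for all $T$. Complementation is a fixed-point-free involution on the $2^k$ subsets of $\{1,\dots,k\}$, so these subsets split into $2^{k-1}$ complementary pairs $\{T,T^c\}$; choosing one representative $C_1,\dots,C_{2^{k-1}}$ from each pair and setting the partner equal to its Euclidean dual yields $C\cong C_1\times C_1^\bot\times\cdots\times C_{2^{k-1}}\times C_{2^{k-1}}^\bot$ after reordering the indices, as claimed.

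The main obstacle I expect is the bookkeeping in the key computation rather than any deep difficulty: one must verify carefully that $\Psi$ (equivalently the idempotent decomposition) carries $\Theta_S$ exactly to coordinate-complementation with trivial action on each $R$-factor, and that the resulting component pairing is genuinely the Euclidean inner product. One should also justify the passage from inclusions to equalities: the identity $C=C^H$ gives the equalities $C_T=C_{T^c}^\bot$ directly, and consistency across a pair, i.e. $(C_T^\bot)^\bot=C_T$, uses that $R$ is Frobenius, so Euclidean duality is an involution on linear codes over $R$. As an alternative one could argue by induction on $k$, using Proposition~\ref{hermitian} as the base case together with the decomposition $C=v_kC_1+(1-v_k)C_2$; however, the component $C_1$ is then a code whose partner is its Hermitian dual over $\mathcal{R}_{k-1}$ rather than a self-dual code, so the induction hypothesis does not apply cleanly and the idempotent approach above is more direct.
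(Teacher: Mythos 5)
Your argument is correct, and it is a tightened, globalized version of what the paper does. The paper's proof peels off one variable at a time: it writes $C=v_kC'+(1-v_k)C''$, expands the Hermitian product to get two orthogonality conditions over $\mathcal{R}_{k-1}$, and then says ``continue similar process'' until $2^k$ conditions over $R$ appear, invoking Proposition~\ref{hermitian} to pair them into Euclidean duals. You instead decompose $\mathcal{R}_k$ in one step via the full system of primitive idempotents $e_T$, observe that $\Theta_{\{1,\dots,k\}}$ sends $e_T$ to $e_{T^c}$ while fixing $R$, and read off that the $T$-component of the Hermitian pairing is the Euclidean pairing between the $T$-component of $\mathbf{c}$ and the $T^c$-component of $\mathbf{c}'$; self-duality then gives $C_T=C_{T^c}^\bot$, and the fixed-point-freeness of $T\mapsto T^c$ produces exactly $2^{k-1}$ dual pairs. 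What your version buys is precisely the step the paper leaves informal: the intermediate conditions produced by the paper's recursion are Hermitian-type pairings between two \emph{different} component codes over $\mathcal{R}_{k-1}$, not self-duality conditions, so Proposition~\ref{hermitian} does not literally apply at each stage --- a point you correctly flag at the end. Your explicit identification of the pairing as subset complementation, together with the remark that $(C_T^\bot)^\bot=C_T$ over a Frobenius ring, closes that gap cleanly.
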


\begin{proof}
We can write $C=v_kC'+(1-v_k)C'',$ where $C'$ and $C''$ are codes of length $n$ over $R_{k-1}.$ Consider 
\begin{equation}\label{hemiteq2}
\begin{array}{lll}
[\mathbf{c}_1,\mathbf{c}_2] & = & \sum_i c_{1i}\overline{c_{2i}}\\
 & = & \sum_i \left(v_kc_{1i}'+(1-v_k)c_{1i}''\right)\overline{\left(v_kc_{2i}'+(1-v_k)c_{2i}''\right)}\\
 & = & \sum_i \left(v_kc_{1i}'+(1-v_k)c_{1i}''\right)\left((1-v_k)\overline{c_{2i}'}+v_k\overline{c_{2i}''}\right)\\
 & = & v_k\sum_i c_{1i}'\overline{c_{2i}''} + (1-v_k)\sum_i c_{2i}'\overline{c_{1i}''},
\end{array}
\end{equation} 
where $(c_{j1}',c_{j2}',\dots,c_{jn}')$ is in $C'$ and $(c_{j1}'',c_{j2}'',\dots,c_{jn}'')$ is in $C'',$ for $j=1,2.$ If equation~\ref{hemiteq2} is $0,$ then it requires 
\begin{equation}\label{hermiteq3}
\sum_i c_{1i}'\overline{c_{2i}''}=0
\end{equation}
 and
\begin{equation}\label{hermiteq4}
\sum_i c_{2i}'\overline{c_{1i}''}=0.
\end{equation} 
If we continue similar process on equation~\ref{hermiteq3} and \ref{hermiteq4}, we will have $2^k$ equations similar to equation~\ref{hermiteq} over $R.$ By Proposition~\ref{hermitian}, $2^k$ equations give $2^{k-1}$ pairs of Euclidean dual over $R.$ Therefore, we have $C$ is isomorphic to 
\[C_1\times C_1^\bot\times\cdots\times C_{2^{k-1}}\times C_{2^{k-1}}^\bot,\]
where $C_1,C_2,\dots,C_{2^{k-1}}$ are codes of length $n$ over $R.$
\end{proof}

We have the following result.

\begin{theorem}
A code $C$ is an Euclidean self-dual code of length $n$ over $\mathcal{R}_k$ if and only if $C=\overline{\Psi}^{-1}(C_1,C_2,\dots,C_{2^k}),$ where $C_1,\dots,C_{2^k}$
are also Euclidean self-dual codes over $R.$
\label{euclid}
\end{theorem}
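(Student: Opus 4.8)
The plan is to exploit the fact, recorded in Section~\ref{grayandauto}, that $\Psi$ is a \emph{ring} isomorphism from $\mathcal{R}_k$ onto $R^{2^k}$ (with componentwise operations), and to track how the Euclidean form transports across $\overline{\Psi}$. A self-dual code is in particular linear, so I would first invoke Theorem~\ref{linearpsi} to write any linear $C$ as $C = \overline{\Psi}^{-1}(C_1, \dots, C_{2^k})$ with each $C_m$ linear over $R$. The theorem then reduces to establishing the single identity
\[
C^\bot = \overline{\Psi}^{-1}\bigl(C_1^\bot, \dots, C_{2^k}^\bot\bigr);
\]
granting this, $C = C^\bot$ is equivalent, by injectivity of $\overline{\Psi}$, to $(C_1, \dots, C_{2^k}) = (C_1^\bot, \dots, C_{2^k}^\bot)$, i.e., to each $C_m$ being Euclidean self-dual over $R$.

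For the core computation, take $\mathbf{c}, \mathbf{d} \in \mathcal{R}_k^n$ and write $\Psi(c_i) = (c_{i,1}, \dots, c_{i,2^k})$ and $\Psi(d_i) = (d_{i,1}, \dots, d_{i,2^k})$. Since $\Psi$ preserves multiplication, $\Psi(c_i d_i) = (c_{i,1} d_{i,1}, \dots, c_{i,2^k} d_{i,2^k})$, and summing over $i$,
\[
\Psi(\mathbf{c}\cdot\mathbf{d}) = \Bigl(\sum_{i=1}^n c_{i,1} d_{i,1}, \dots, \sum_{i=1}^n c_{i,2^k} d_{i,2^k}\Bigr).
\]
Because $\Psi$ is a bijection, $\mathbf{c}\cdot\mathbf{d} = 0$ in $\mathcal{R}_k$ if and only if every $R$-component $\sum_i c_{i,m} d_{i,m}$ vanishes, that is, if and only if the $m$-th slices $\mathbf{c}^{[m]} = (c_{1,m}, \dots, c_{n,m})$ and $\mathbf{d}^{[m]} = (d_{1,m}, \dots, d_{n,m})$ are Euclidean-orthogonal over $R$ for all $m$.

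Finally I would use the product structure of $C$ to decouple the slices. Since $C = \overline{\Psi}^{-1}(C_1, \dots, C_{2^k})$, for a fixed $m$ and any target $\mathbf{x} \in C_m$ there is a codeword $\mathbf{c} \in C$ whose $m$-th slice equals $\mathbf{x}$ and whose other slices are $0$ (linearity gives $0 \in C_{m'}$). Evaluating the componentwise orthogonality condition on such codewords shows that $\mathbf{d} \in C^\bot$ forces $\mathbf{d}^{[m]} \in C_m^\bot$ for every $m$; the reverse inclusion is immediate from the same factorization. This yields the displayed identity for $C^\bot$ and hence the theorem.

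The step I expect to be the main obstacle is the factorization in the second paragraph: it hinges on $\Psi$ being multiplicative rather than merely additive and bijective, so I would make a point of explicitly invoking the ring-isomorphism property from Section~\ref{grayandauto}. If a self-contained verification is preferred, one checks multiplicativity on the idempotent basis $\{v_S\}$, which maps to the standard orthogonal idempotents of $R^{2^k}$; once $\Psi(\alpha\beta) = \Psi(\alpha)\Psi(\beta)$ is secured, transporting the vanishing condition through the bijection and exploiting the independence of slices are routine.
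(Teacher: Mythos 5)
Your proof is correct and is, in essence, the argument the paper intends: the paper's own ``proof'' is only a pointer to \cite[Proposition 4.1]{irw2}, and the standard argument there is exactly the CRT-style decomposition you give --- transporting the Euclidean form componentwise through the ring isomorphism $\Psi$, using that $\mathbf{c}\cdot\mathbf{d}=0$ in $\mathcal{R}_k$ iff every $R$-component vanishes, and invoking Theorem~\ref{linearpsi} to decouple the $2^k$ slices. Your writeup actually supplies the details the paper omits, and the one step you flag as delicate (multiplicativity of $\Psi$) is indeed the crux, but it is covered by the paper's assertion that $\Psi$ is a ring isomorphism realizing $\mathcal{R}_k\cong R^{2^k}$.
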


\begin{proof}
Similar to the proof of \cite[Proposition 4.1]{irw2}.
\end{proof}

We have the following immediate consequence.

\begin{corollary}
Eucidean self-dual codes of length $n$ over $\mathcal{R}_k$ exist if and only if Euclidean self-dual codes of length $n$ over $R$ exist.
\end{corollary}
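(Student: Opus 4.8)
The plan is to deduce this corollary directly from Theorem~\ref{euclid}, which already characterizes all Euclidean self-dual codes over $\mathcal{R}_k$ as images $\overline{\Psi}^{-1}(C_1,\dots,C_{2^k})$ with each $C_i$ a Euclidean self-dual code of length $n$ over $R$. Since $\overline{\Psi}$ was shown in Section~\ref{grayandauto} to be a bijection, $\overline{\Psi}^{-1}$ is well-defined and no information is lost in passing between the two sides; existence on one side should therefore translate into existence on the other with essentially no extra work. I would prove the two implications separately.

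For the forward direction, I would assume that a Euclidean self-dual code $C$ of length $n$ over $\mathcal{R}_k$ exists. Applying Theorem~\ref{euclid}, I write $C=\overline{\Psi}^{-1}(C_1,\dots,C_{2^k})$, where each component $C_i$ is a Euclidean self-dual code of length $n$ over $R$. In particular, $C_1$ is such a code, which already witnesses the existence of a Euclidean self-dual code of length $n$ over $R$.

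For the reverse direction, I would start from a given Euclidean self-dual code $D$ of length $n$ over $R$. I would form the $2^k$-tuple $(D,D,\dots,D)$ and set $C=\overline{\Psi}^{-1}(D,\dots,D)$. Since every coordinate code is Euclidean self-dual over $R$, Theorem~\ref{euclid} guarantees that $C$ is a Euclidean self-dual code of length $n$ over $\mathcal{R}_k$, establishing the desired existence.

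The argument is in effect a single invocation of the preceding theorem together with the bijectivity of $\overline{\Psi}$, so I do not expect any genuine obstacle. The only points requiring a little care are that in the reverse direction one must supply a full tuple of $2^k$ self-dual codes over $R$ (repeating one fixed code $D$ suffices), and that $\overline{\Psi}^{-1}$ is legitimate precisely because $\overline{\Psi}$ is bijective; both are immediate from the constructions in Section~\ref{grayandauto} and the statement of Theorem~\ref{euclid}.
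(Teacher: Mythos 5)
Your proof is correct and matches the paper's approach: the paper states this corollary as an immediate consequence of Theorem~\ref{euclid} with no further argument, and your two directions (extracting one component $C_1$, and assembling the tuple $(D,\dots,D)$ via $\overline{\Psi}^{-1}$) are exactly the intended reading of that theorem.
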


\section{Weights and MacWilliams Identities}

Let $d_H(C)$ be the Hamming distance of a code $C.$ The following proposition gives the Hamming distance for codes over the ring $\mathcal{R}_k.$

\begin{proposition}
If $C=\overline{\Psi}^{-1}(C_1,\dots,C_{2^k}),$ is a code of length $n$ over $\mathcal{R}_k,$ then $d_H(C)=\min_{1\leq i\leq 2^k}d_H(C_i).$
\end{proposition}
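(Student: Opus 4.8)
The plan is to reduce everything to Hamming supports and to exploit that $\Psi$, and hence $\overline{\Psi}$, is a ring isomorphism, so that the only structural fact really needed is $\Psi(a)=0\iff a=0$. Since $C$ and each $C_i$ are linear, minimum Hamming distance equals minimum nonzero weight, so I would work with weights throughout, using $d_H(C)=\min_{\mathbf{0}\neq\mathbf{c}\in C}\wt_H(\mathbf{c})$ and the analogous identity for each $C_i$.

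First I would fix a codeword $\mathbf{c}=(a_1,\dots,a_n)\in C$ and write $\overline{\Psi}(\mathbf{c})=(\mathbf{c}^{(1)},\dots,\mathbf{c}^{(2^k)})$ with $\mathbf{c}^{(i)}\in C_i$. Because $\Psi$ is an isomorphism, $a_j\neq 0$ holds exactly when $\Psi(a_j)\neq 0$, that is, when the $j$-th entry of at least one slice $\mathbf{c}^{(i)}$ is nonzero. Hence the support of $\mathbf{c}$ is the union of the supports of its slices, which immediately yields $\wt_H(\mathbf{c}^{(i)})\leq\wt_H(\mathbf{c})$ for every $i$. This single observation is the engine of both inequalities.

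For $d_H(C)\geq\min_i d_H(C_i)$ I would take a nonzero $\mathbf{c}\in C$ of minimal weight; some slice $\mathbf{c}^{(i)}$ is then nonzero (again by injectivity of $\Psi$), so $\wt_H(\mathbf{c})\geq\wt_H(\mathbf{c}^{(i)})\geq d_H(C_i)\geq\min_t d_H(C_t)$. For the reverse inequality, let $i_0$ attain $\min_i d_H(C_i)$ and pick a minimal-weight nonzero $\mathbf{x}\in C_{i_0}$. By Theorem~\ref{linearpsi}, $\overline{\Psi}(C)$ is the full direct product $C_1\times\cdots\times C_{2^k}$, and since each $C_i$ is linear it contains $\mathbf{0}$; therefore the tuple with $\mathbf{x}$ in slot $i_0$ and zeros elsewhere lies in $\overline{\Psi}(C)$, and its preimage $\mathbf{c}\in C$ is nonzero with support exactly that of $\mathbf{x}$, so $\wt_H(\mathbf{c})=\wt_H(\mathbf{x})=\min_i d_H(C_i)$. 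Combining the two inequalities closes the argument.

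The step needing the most care is the upper bound, since it is the only place that uses more than injectivity of $\Psi$: one must invoke the full direct-product description of $\overline{\Psi}(C)$ from Theorem~\ref{linearpsi} to guarantee that a ``single-slice'' vector really is a codeword, together with the linearity of the component codes (so that $\mathbf{0}\in C_i$) to pad the remaining slots. Everything else is the routine support bookkeeping recorded above.
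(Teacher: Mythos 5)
Your proof is correct and follows essentially the same route as the paper: the decisive step in both is embedding a minimum-weight codeword of the minimizing component $C_{i_0}$ into a single slice (zeros elsewhere) and noting that its preimage under $\overline{\Psi}$ has the same Hamming weight. You additionally spell out the lower bound $d_H(C)\geq\min_i d_H(C_i)$ via the support-union observation, a step the paper's one-line proof leaves entirely implicit, so your version is the more complete of the two.
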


\begin{proof}
Let $d_H(C_j)=\min_{1\leq i\leq 2^k}d_H(C_i),$ for some $j.$ Also, let $\mathbf{c}_j$ be a codeword in $C_j$ such that $wt(\mathbf{c}_j)=d_H(C_j).$ Then we have that
\[d_H(C)=wt\left(\overline{\Psi}^{-1}(\mathbf{0},\dots\mathbf{0},\mathbf{c}_j,\mathbf{0},\dots,\mathbf{0})\right)=d_H(C_j).\]
\end{proof}

Let $wt_H(\mathbf{c})$ be a {\it Hamming weight} of codeword $\mathbf{c}.$ Also, let
\[W_C(X,Y)=\sum_{\mathbf{c}\in C}X^{n-wt_H(\mathbf{c})}Y^{wt_H(\mathbf{c})},\]
be {\it the Hamming weight enumerator} of code $C.$ We have the following relation between Hamming weight enumerator of a code $C$ and its dual.

\begin{proposition}
If $C$ is a code of length $n$ over $\mathcal{R}_k,$ then
\[W_{C^\bot}(X,Y)=\frac{1}{|C|}W_C\left(X+(|R|^{2^k}-1)Y,X-Y\right).\]
\end{proposition}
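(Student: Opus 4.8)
The plan is to recognize this as the classical MacWilliams identity for a finite commutative Frobenius ring and to prove it by the character-theoretic (Poisson summation) method, using that --- as established above via $\mathcal{R}_k \cong R\times\cdots\times R$ --- the ring $\mathcal{R}_k$ is Frobenius with $|\mathcal{R}_k|=|R|^{2^k}$. One might hope to reduce everything to the MacWilliams identity over $R$ through Theorem~\ref{linearpsi} and the decomposition $C=\overline{\Psi}^{-1}(C_1,\dots,C_{2^k})$, but the Hamming weight of a word over $\mathcal{R}_k$ is \emph{not} the sum of the Hamming weights of its $2^k$ components (a coordinate is nonzero as soon as one of its components is), so the Hamming enumerator does not factor through the components. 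This is why I would work directly over $\mathcal{R}_k$.

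First I would fix a generating character $\chi$ of the additive group of $\mathcal{R}_k$. Such a character exists precisely because $\mathcal{R}_k$ is Frobenius; concretely, if $\chi_R$ is a generating character of $R$, then transporting along $\Psi$ and setting $\chi=\prod_{i=1}^{2^k}\chi_R\circ\pi_i$, where $\pi_i$ are the coordinate projections of $\mathcal{R}_k\cong R^{2^k}$, yields one. The two defining properties I would use are the orthogonality relation $\sum_{x\in\mathcal{R}_k}\chi(xy)=|\mathcal{R}_k|$ if $y=0$ and $0$ if $y\neq 0$, together with the fact that $\chi$ induces the Euclidean duality: for a linear code $C$, its orthogonal $C^\bot$ with respect to $\mathbf{c}\cdot\mathbf{c}'$ coincides with the character-annihilator $\{\mathbf{c}':\chi(\mathbf{c}\cdot\mathbf{c}')=1\ \forall\,\mathbf{c}\in C\}$, and $|C|\,|C^\bot|=|\mathcal{R}_k|^n$.

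Next, extend $\chi$ to $\mathcal{R}_k^n$ and, for any $f\colon\mathcal{R}_k^n\to\mathbb{C}$, define its Hadamard transform $\hat f(\mathbf{y})=\sum_{\mathbf{x}}\chi(\mathbf{x}\cdot\mathbf{y})f(\mathbf{x})$. Poisson summation then gives $\sum_{\mathbf{c}\in C^\bot}f(\mathbf{c})=\tfrac{1}{|C|}\sum_{\mathbf{c}\in C}\hat f(\mathbf{c})$. I would apply this to the multiplicative function $f(\mathbf{x})=X^{n-wt_H(\mathbf{x})}Y^{wt_H(\mathbf{x})}=\prod_{i=1}^n g(x_i)$, where $g(0)=X$ and $g(a)=Y$ for $a\neq 0$. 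Since $f$ factors coordinatewise, so does its transform, $\hat f(\mathbf{y})=\prod_i\hat g(y_i)$, and the orthogonality relation gives $\hat g(0)=X+(|R|^{2^k}-1)Y$ and, for $y\neq 0$, $\hat g(y)=X+Y\bigl(\sum_x\chi(xy)-1\bigr)=X-Y$. Hence $\hat f(\mathbf{y})=\bigl(X+(|R|^{2^k}-1)Y\bigr)^{n-wt_H(\mathbf{y})}(X-Y)^{wt_H(\mathbf{y})}$, and substituting into Poisson summation yields exactly the claimed identity.

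The main obstacle is not the computation of $\hat g$, which is routine, but justifying the two Frobenius-specific facts: the vanishing of $\sum_x\chi(xy)$ for $y\neq 0$, and the coincidence of the Euclidean dual $C^\bot$ with the character-annihilator together with $|C|\,|C^\bot|=|\mathcal{R}_k|^n$. Both rest on $\chi$ being a generating character, i.e. on $\ker\chi$ containing no nonzero ideal; I would verify this for the product character built from $\chi_R$, invoking that $R$ is Frobenius so that $\chi_R$ is generating. Once these structural points are secured, the identity follows formally.
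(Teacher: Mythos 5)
Your proposal is correct and follows essentially the same route as the paper: the paper's proof is the single remark ``use the fact that $|\mathcal{R}_k|=|R|^{2^k}$,'' i.e.\ it invokes the classical MacWilliams identity for a Frobenius alphabet of that cardinality, and your character-theoretic Poisson-summation argument is precisely the standard derivation of that identity, with the Frobenius property of $\mathcal{R}_k$ supplied by the isomorphism $\Psi$ exactly as the paper sets it up. Your observation that the Hamming weight does not factor through the $\overline{\Psi}$-components, so one must work directly over $\mathcal{R}_k$, is a worthwhile clarification the paper omits.
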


\begin{proof}
Use the fact that $|\mathcal{R}_k|=|R|^{2^k}.$
\end{proof}

Now, let $\wtg(\alpha)$ be the Lee weight of any element $\alpha$ in $R.$ Let $a=\sum_{S\subseteq \{1,2,\dots,k\}}\alpha_Sv_S$ be any element in $\mathcal{R}_k.$
Define
\[\wtgr(a)=\sum_{i=1}^{2^k}\wtg\left(\sum_{S\subseteq S_i}\alpha_S\right)\]
be the Lee weight of $a.$ For any $\mathbf{a}=(a_1,\dots,a_n)$ in $\mathcal{R}_k^n,$ define the Lee weight of $\mathbf{a}$ as follows,
\[\wtgr(\mathbf{a})=\sum_{j=1}^n Wt_L(a_j).\]
Then we have the following result.

\begin{proposition}
If $C=\overline{\Psi}^{-1}(C_1,\dots,C_{2^k})$ is a code of length $n$ over $\mathcal{R}_k,$ then
\[d_L(C)=\min_{1\leq i\leq 2^k}d_L(C_i).\]
\label{lee}
\end{proposition}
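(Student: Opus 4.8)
The plan is to mirror the proof of the Hamming-distance proposition above; the one genuinely new ingredient is that the Lee weight $\wtgr$ over $\mathcal{R}_k$ splits additively across the $2^k$ component codes produced by $\overline{\Psi}$.

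First I would record the key identity. For $a=\sum_{S}\alpha_Sv_S$ the $i$-th coordinate of $\Psi(a)$ is exactly $\sum_{S\subseteq S_i}\alpha_S$, so the very definition of $\wtgr$ reads $\wtgr(a)=\sum_{i=1}^{2^k}\wtg\!\left(\sum_{S\subseteq S_i}\alpha_S\right)$, i.e.\ $\wtgr(a)$ is the total Lee weight over $R$ of the tuple $\Psi(a)\in R^{2^k}$. Applying this in each coordinate and interchanging the two finite sums, for $\mathbf{a}=(a_1,\dots,a_n)$ I would write $\overline{\Psi}(\mathbf{a})=(\mathbf{b}_1,\dots,\mathbf{b}_{2^k})$, where $\mathbf{b}_i\in R^n$ is the $i$-th component word, and obtain
\[\wtgr(\mathbf{a})=\sum_{i=1}^{2^k}\wtg(\mathbf{b}_i).\]
By Theorem~\ref{linearpsi}, for $\mathbf{a}\in C$ each $\mathbf{b}_i$ lies in $C_i$; since $\overline{\Psi}$ is a bijection and the $C_i$ are linear, $d_L(C)$ and each $d_L(C_i)$ equal the corresponding minimum nonzero Lee weights.

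For the inequality $d_L(C)\le\min_i d_L(C_i)$, let $j$ attain the minimum and pick a nonzero $\mathbf{c}_j\in C_j$ with $\wtg(\mathbf{c}_j)=d_L(C_j)$. Then $\mathbf{c}=\overline{\Psi}^{-1}(\mathbf{0},\dots,\mathbf{0},\mathbf{c}_j,\mathbf{0},\dots,\mathbf{0})$, with $\mathbf{c}_j$ in the $j$-th slot, is a nonzero codeword of $C$, and by the identity above its Lee weight is $\wtg(\mathbf{c}_j)=d_L(C_j)=\min_i d_L(C_i)$. For the reverse inequality, take any nonzero $\mathbf{c}\in C$; since $\overline{\Psi}$ is injective at least one component $\mathbf{b}_{i_0}$ is nonzero, so, using nonnegativity of Lee weights, $\wtgr(\mathbf{c})=\sum_i\wtg(\mathbf{b}_i)\ge\wtg(\mathbf{b}_{i_0})\ge d_L(C_{i_0})\ge\min_i d_L(C_i)$. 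Hence $d_L(C)\ge\min_i d_L(C_i)$, and the two bounds together give the claim.

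The main (and essentially only) obstacle is establishing the additive decomposition $\wtgr(\mathbf{a})=\sum_i\wtg(\mathbf{b}_i)$: one must match the summands $\sum_{S\subseteq S_i}\alpha_S$ in the definition of $\wtgr$ with the coordinates of $\Psi$, and confirm that the extension $\overline{\Psi}$ groups these correctly into the $2^k$ words $\mathbf{b}_i$. Once this bookkeeping is done the argument is identical in form to the Hamming-distance case, so no further difficulty is expected.
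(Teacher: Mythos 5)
Your proof is correct and follows the same basic approach as the paper: the upper bound $d_L(C)\le\min_i d_L(C_i)$ is obtained exactly as in the paper's proof, by placing a minimum-Lee-weight codeword of $C_j$ in the $j$-th slot and pulling back through $\overline{\Psi}^{-1}$. In fact your write-up is more complete than the paper's, which stops after exhibiting that one codeword and never argues the reverse inequality $d_L(C)\ge\min_i d_L(C_i)$; your additive decomposition $\wtgr(\mathbf{a})=\sum_i\wtg(\mathbf{b}_i)$, read off directly from the definition of $\wtgr$ and the coordinates of $\Psi$, is precisely the missing ingredient that justifies the equality the paper asserts.
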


\begin{proof}
 Let $d_L(C_j)=\min_{1\leq i\leq 2^k}d_L(C_i),$ for some $j,$ and let $\mathbf{c}_j$ be a codeword in $C_j$ such that $\wtgr(\mathbf{c}_j)=d_L(C_j).$ We have that
\[d_L(C)=\wtgr\left(\overline{\Psi}^{-1}(\mathbf{0},\dots\mathbf{0},\mathbf{c}_j,\mathbf{0},\dots,\mathbf{0})\right)=d_L(C_j).\]
\end{proof}

Since the ring $\mathcal{R}_k$ is isomorphic to $R^{2^k},$ the generating character for $\widehat{\mathcal{R}_k}$ is the product
of generating character for $\widehat{R}.$ Now, if $\chi$ is a generating character for $R,$ such that
\[\chi(x)=\xi^{wt_L(x)},\]
for any $x\in R,$ then the generating character $\chi$ for $\mathcal{R}_k$ is
\[\chi_1(\beta)=\xi^{Wt_L(\overline{\Psi}(\beta))},\]
for any $\beta\in \mathcal{R}_k.$

Define the matrix $T$ indexed by $\alpha,\beta\in \mathcal{R}_k,$ as follows
\[T_{\alpha,\beta}=\chi_\alpha(\beta)=\chi(\alpha\beta),\]
and the matrix $T_H$ as follows
\[\left(T_H\right)_{\alpha,\beta}=\chi_\alpha(\overline{\beta})=\chi(\alpha\overline{\beta}),\]
where $\overline{\beta}$ is the conjugate of $\beta$ induced by $\Theta_S,$ for some $S\subseteq\{1,2,\dots,k\}.$

Also, define the complete weight enumerator for a code $C$ as follows,
\[\cwe_C(\mathbf{X})=\sum_{\mathbf{c}\in C}\prod_{b\in \mathcal{R}_k}X_b^{n_b(\mathbf{c})},\]
where $n_b(\mathbf{c})$ is the number of occurrences of the element $b$ in $\mathbf{c}.$ Then, we have the following
result.

\begin{theorem}
If $C$ is a linear code over $\mathcal{R}_k,$ then
\begin{equation}
\cwe_{C^\bot}(\mathbf{X})=\frac{1}{|C|}\cwe_C(T\cdot \mathbf{X})
\end{equation}
and
\begin{equation}
\cwe_{C^H}(\mathbf{X})=\frac{1}{|C|}\cwe_C(T_H\cdot \mathbf{X})
\end{equation}
\label{macrel}
\end{theorem}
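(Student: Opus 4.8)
The plan is to prove both identities by the standard character-sum (Poisson summation) argument, using two facts already in place: that $\mathcal{R}_k$ is a Frobenius ring, and that $\chi_1$, defined just above by $\chi_1(\beta)=\xi^{\wtgr(\overline{\Psi}(\beta))}$, is a generating character of $\mathcal{R}_k$. First I would record the monomial form of the enumerator. For the function $f:\mathcal{R}_k^n\to\mathbb{C}$ given by $f(\mathbf{x})=\prod_{b\in\mathcal{R}_k}X_b^{n_b(\mathbf{x})}$ one has $\cwe_C(\mathbf{X})=\sum_{\mathbf{x}\in C}f(\mathbf{x})$, and since $n_b$ simply counts coordinates, $f$ factors coordinatewise as $f(\mathbf{x})=\prod_{i=1}^nX_{x_i}$.

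The heart of the matter is the orthogonality statement: for each fixed $\mathbf{y}\in\mathcal{R}_k^n$,
\[\sum_{\mathbf{x}\in C}\chi_1(\mathbf{x}\cdot\mathbf{y})=\begin{cases}|C|,&\mathbf{y}\in C^\bot,\\0,&\mathbf{y}\notin C^\bot.\end{cases}\]
I would prove this by noting that $\mathbf{x}\mapsto\mathbf{x}\cdot\mathbf{y}$ is $\mathcal{R}_k$-linear and $C$ is a submodule, so $\{\mathbf{x}\cdot\mathbf{y}:\mathbf{x}\in C\}$ is closed under multiplication by $\mathcal{R}_k$; if some $\mathbf{x}_0\cdot\mathbf{y}\neq0$, then $\{r(\mathbf{x}_0\cdot\mathbf{y}):r\in\mathcal{R}_k\}$ is a nonzero ideal, which cannot lie in $\ker\chi_1$ because $\chi_1$ is generating. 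Hence the character $\mathbf{x}\mapsto\chi_1(\mathbf{x}\cdot\mathbf{y})$ is nontrivial on $C$ unless $\mathbf{y}\in C^\bot$, and the dichotomy follows from summing a character over a group.

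Granting this, I would interchange the order of summation in $\sum_{\mathbf{x}\in C}\widehat{f}(\mathbf{x})$, where $\widehat{f}(\mathbf{x})=\sum_{\mathbf{y}\in\mathcal{R}_k^n}\chi_1(\mathbf{x}\cdot\mathbf{y})f(\mathbf{y})$. On one hand the orthogonality relation collapses it to $|C|\sum_{\mathbf{y}\in C^\bot}f(\mathbf{y})=|C|\,\cwe_{C^\bot}(\mathbf{X})$. On the other hand, because $f$ factors coordinatewise, $\widehat{f}(\mathbf{x})=\prod_{i=1}^n\big(\sum_{\beta}\chi_1(x_i\beta)X_\beta\big)=\prod_{i=1}^n(T\cdot\mathbf{X})_{x_i}$ by the very definition $T_{\alpha,\beta}=\chi_1(\alpha\beta)$; regrouping by coordinate values gives $\widehat{f}(\mathbf{x})=\prod_b((T\cdot\mathbf{X})_b)^{n_b(\mathbf{x})}$, so $\sum_{\mathbf{x}\in C}\widehat{f}(\mathbf{x})=\cwe_C(T\cdot\mathbf{X})$. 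Equating the two evaluations yields the first identity. For the Hermitian identity I would run the identical computation after replacing the bilinear form $\mathbf{x}\cdot\mathbf{y}$ by the Hermitian form $[\mathbf{x},\mathbf{y}]=\sum_i x_i\overline{y_i}$. This form is still $\mathcal{R}_k$-linear in its first argument (the conjugation $\Theta_S$ acts only on the second), so the orthogonality statement goes through verbatim with $C^\bot$ replaced by $C^H$; and the coordinatewise factor is now $\sum_\beta\chi_1(x_i\overline{\beta})X_\beta=(T_H\cdot\mathbf{X})_{x_i}$ by the definition of $T_H$, giving the second identity.

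The main obstacle is the orthogonality dichotomy, and within it the vanishing case: one must convert the character condition $\chi_1(\mathbf{x}\cdot\mathbf{y})=1$ for all $\mathbf{x}\in C$ into the algebraic condition $\mathbf{x}\cdot\mathbf{y}=0$ for all $\mathbf{x}\in C$. This is precisely where the Frobenius hypothesis enters, through the defining property of a generating character that its kernel contains no nonzero ideal, combined with the module structure of $C$; everything else is the routine factorization of a multiplicative function and an exchange of summation order. For the Hermitian version the only extra remark needed is that $\Theta_S$ is a ring automorphism, so conjugation is a bijection and leaves the first-variable linearity---hence the entire argument---intact.
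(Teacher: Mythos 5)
Your proof is correct, but it takes a different route from the paper in the sense that the paper offers no argument at all: its entire proof is the citation ``This theorem is a consequence of [Corollary 8.2, Wood].'' What you have written is, in effect, a self-contained reconstruction of the proof of that cited result, specialized to $\mathcal{R}_k$: the coordinatewise factorization of the monomial $f(\mathbf{x})=\prod_i X_{x_i}$, the character-sum dichotomy $\sum_{\mathbf{x}\in C}\chi_1(\mathbf{x}\cdot\mathbf{y})\in\{|C|,0\}$ according to whether $\mathbf{y}\in C^\bot$, and the double-counting of $\sum_{\mathbf{x}\in C}\widehat{f}(\mathbf{x})$. You also correctly isolate the one genuinely nontrivial point --- converting ``$\chi_1(\mathbf{x}\cdot\mathbf{y})=1$ for all $\mathbf{x}\in C$'' into ``$\mathbf{x}\cdot\mathbf{y}=0$ for all $\mathbf{x}\in C$'' via the fact that $\{\mathbf{x}\cdot\mathbf{y}:\mathbf{x}\in C\}$ is an ideal and the kernel of a generating character contains no nonzero ideal --- which is exactly where the Frobenius hypothesis (established in the paper through $\mathcal{R}_k\cong R^{2^k}$) is consumed. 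Your treatment of the Hermitian case is also right: since $\Theta_S$ is a ring automorphism, $[\mathbf{x},\mathbf{y}]$ remains $\mathcal{R}_k$-linear in $\mathbf{x}$, so the same dichotomy holds with $C^H$ in place of $C^\bot$ and the column factor becomes $\sum_\beta\chi_1(x_i\overline{\beta})X_\beta=(T_H\cdot\mathbf{X})_{x_i}$. What your approach buys is transparency and self-containment; what the paper's citation buys is brevity and the full generality of Wood's duality framework (which also underlies the symmetrized version in Theorem~\ref{macrel2}). The only cosmetic caveat is that the paper writes $\chi_1(\beta)=\xi^{Wt_L(\overline{\Psi}(\beta))}$ with $\overline{\Psi}$ where $\Psi$ is meant for a single ring element, and writes $T_{\alpha,\beta}=\chi(\alpha\beta)$ where $\chi$ should be read as $\chi_1$; your reading of both is the intended one.
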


\begin{proof}
This theorem is a consequence of \cite[Corollary 8.2]{wood}.
\end{proof}

Note that $T$ is a $|R|^{2^k}$ by $|R|^{2^k}$ matrix indexed by the elements of $\mathcal{R}_k.$ Let $\mathcal{R}_k^\times$ be the group of units in the ring
$\mathcal{R}_k$ and let $\alpha\sim\alpha'$ if $\alpha'=u\alpha,$ for some $u\in G,$ where $G$ is a subgroup of $\mathcal{R}_k^\times.$
It can be seen that the relation $\sim$ is
an equivalence relation, so we define $\mathcal{A}=\{\alpha_1,\dots,\alpha_t\}$ be the set of representatives.
Let $S$ be the $t$ by $t$ matrix indexed by the elements in $\mathcal{A}.$ Also,
define $S_{\alpha,\beta}=\sum_{\gamma\sim\beta}T_{\alpha,\gamma}.$ We have the following lemma.

\begin{lemma}
If $\alpha\sim\alpha'$ then the row $S_{\alpha}$ is equal to the row $S_{\alpha'}.$
\end{lemma}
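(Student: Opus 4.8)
The plan is to unwind the definition of the entries of $S$ and exploit that left multiplication by a unit permutes each equivalence class. Since $\alpha\sim\alpha'$, I would begin by writing $\alpha'=u\alpha$ for some $u\in G$. Then, fixing an arbitrary representative $\beta\in\mathcal{A}$, I would expand
\[
S_{\alpha',\beta}=\sum_{\gamma\sim\beta}T_{\alpha',\gamma}=\sum_{\gamma\sim\beta}\chi(\alpha'\gamma)=\sum_{\gamma\sim\beta}\chi(u\alpha\gamma),
\]
and use commutativity of $\mathcal{R}_k$ to rewrite each summand as $\chi(\alpha(u\gamma))$.

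The heart of the argument is to observe that $\gamma\mapsto u\gamma$ maps the equivalence class of $\beta$ bijectively onto itself. I would verify this in two steps: closure, since $\gamma=g\beta$ with $g\in G$ forces $u\gamma=(ug)\beta$ with $ug\in G$, so $u\gamma\sim\beta$; and injectivity, since $u$ is a unit in $\mathcal{R}_k$, and an injective self-map of the finite class is necessarily a bijection. Granting this, reindexing the sum by $\gamma'=u\gamma$ gives
\[
S_{\alpha',\beta}=\sum_{\gamma'\sim\beta}\chi(\alpha\gamma')=\sum_{\gamma'\sim\beta}T_{\alpha,\gamma'}=S_{\alpha,\beta},
\]
and since $\beta$ was arbitrary this yields the equality of rows $S_\alpha=S_{\alpha'}$.

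I expect the only delicate point to be the bijection claim: one must be sure the reindexing runs over the \emph{distinct} members of the class without repetition or omission. This is exactly what the unit $u$ guarantees, and it is the reason the subgroup $G$ is taken inside $\mathcal{R}_k^\times$ rather than being an arbitrary subset. Everything else is bookkeeping once commutativity is used to move $u$ past $\alpha$; note in particular that the additive‑character property of $\chi$ is not even needed for this lemma, only the permutation of the class.
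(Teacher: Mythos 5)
Your proof is correct and follows essentially the same route as the paper: write $\alpha'=u\alpha$, move the unit onto $\gamma$, and reindex the sum over the equivalence class of $\beta$ via the substitution $\gamma'=u\gamma$. You are in fact more careful than the paper on the one point that needs care (that $\gamma\mapsto u\gamma$ permutes the class of $\beta$, which the paper uses silently), and your observation that the character structure of $\chi$ is irrelevant here is accurate.
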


\begin{proof}
If $\alpha\sim\alpha'$ then for any column $\beta$ we have
\[\begin{array}{lllll}
S_{\alpha',\beta} & = & \sum_{\gamma\sim\beta}T_{\alpha',\gamma} & = & \sum_{\gamma\sim\beta}\xi^{\wtgr(\overline{\Psi}(\alpha'\gamma))}.
\end{array}\]
Since $\overline{\Psi}(\alpha\gamma)=\overline{\Psi}(\alpha)\overline{\Psi}(\gamma),$ where the multiplication in the right side of equal sign carried
out coordinate-wise, we have that
\[\begin{array}{lll}
\sum_{\gamma\sim\beta}T_{\alpha',\gamma} & = & \sum_{\gamma\sim\beta}\xi^{\wtgr(\overline{\Psi}(\alpha) \overline{\Psi}(u)\overline{\Psi}(\gamma))}\\
 & = & \sum_{\gamma'\sim\beta}\xi^{\wtgr(\overline{\Psi}(\alpha)\overline{\Psi}(\gamma'))}\\
 & = & \sum_{\gamma'\sim\beta}T_{\alpha,\gamma'}\\
 & = & S_{\alpha,\beta}.
\end{array}\]
Therefore, $S_\alpha=S_{\alpha'}$ when $\alpha\sim\alpha'.$
\end{proof}

Now, define the symmetrized weight enumerator for a code $C$ to be
\[\swe_C(\mathbf{Y_{\mathcal{A}}})=\sum_{\mathbf{c}\in C}\prod_{\alpha\in\mathcal{A}}Y_\alpha^{\swc_\alpha(\mathbf{c})},\]
where $\swc_\alpha(\mathbf{c})=\sum_{\alpha'\sim\alpha}n_{\alpha'}(\mathbf{c}).$ Then, we have the following theorem.

\begin{theorem}
If $C$ is a linear code over $\mathcal{R}_k,$ then
\[\swe_{C^\bot}=\frac{1}{|C|}\swe_C(S\cdot \mathbf{Y}_{\mathcal{A}}).\]
\label{macrel2}
\end{theorem}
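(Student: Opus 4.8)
The plan is to obtain this symmetrized identity from the complete weight enumerator identity of Theorem~\ref{macrel} by merging the variables that lie in a common $\sim$-class. For each $b\in\mathcal{R}_k$ write $[b]$ for the unique representative in $\mathcal{A}$ with $b\sim[b]$, and consider the specialization $X_b\mapsto Y_{[b]}$. Since $\swc_\alpha(\mathbf{c})=\sum_{b\sim\alpha}n_b(\mathbf{c})$ by definition, this specialization sends $\cwe_C$ to $\swe_C$:
\[
\cwe_C(\mathbf{X})\big|_{X_b\mapsto Y_{[b]}}=\sum_{\mathbf{c}\in C}\prod_{\alpha\in\mathcal{A}}Y_\alpha^{\sum_{b\sim\alpha}n_b(\mathbf{c})}=\swe_C(\mathbf{Y}_{\mathcal{A}}),
\]
and by the same token it sends $\cwe_{C^\bot}$ to $\swe_{C^\bot}$.

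I would then apply this specialization to both sides of $\cwe_{C^\bot}(\mathbf{X})=\tfrac{1}{|C|}\cwe_C(T\cdot\mathbf{X})$. The left-hand side becomes $\swe_{C^\bot}(\mathbf{Y}_{\mathcal{A}})$ immediately, so it remains only to identify the specialized right-hand side. The expression $\cwe_C(T\cdot\mathbf{X})$ is built by first substituting $X_\alpha\mapsto(T\cdot\mathbf{X})_\alpha=\sum_\beta T_{\alpha,\beta}X_\beta$ and then setting $X_\beta\mapsto Y_{[\beta]}$; composing these two steps, each $X_\alpha$ is carried to
\[
\sum_{\beta\in\mathcal{R}_k}T_{\alpha,\beta}Y_{[\beta]}=\sum_{\gamma\in\mathcal{A}}\Big(\sum_{\beta\sim\gamma}T_{\alpha,\beta}\Big)Y_\gamma=\sum_{\gamma\in\mathcal{A}}S_{\alpha,\gamma}Y_\gamma=(S\cdot\mathbf{Y}_{\mathcal{A}})_\alpha,
\]
using the definition $S_{\alpha,\gamma}=\sum_{\beta\sim\gamma}T_{\alpha,\beta}$.

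The decisive step, and where the preceding lemma enters, is to confirm that this substitution is compatible with the variable-merging that produces $\swe_C$. Collapsing $\cwe_C$ into $\swe_C$ is legitimate only if the images $(S\cdot\mathbf{Y}_{\mathcal{A}})_b$ coincide for all $b$ in one $\sim$-class, which is exactly the assertion of the lemma above that $\alpha\sim\alpha'$ implies $S_\alpha=S_{\alpha'}$. Granting it, for each $\mathbf{c}\in C$ the product $\prod_{b\in\mathcal{R}_k}\big((S\cdot\mathbf{Y}_{\mathcal{A}})_b\big)^{n_b(\mathbf{c})}$ regroups by classes as $\prod_{\alpha\in\mathcal{A}}\big((S\cdot\mathbf{Y}_{\mathcal{A}})_\alpha\big)^{\swc_\alpha(\mathbf{c})}$, whence the specialized right-hand side is precisely $\tfrac{1}{|C|}\swe_C(S\cdot\mathbf{Y}_{\mathcal{A}})$, giving the theorem. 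I expect the only genuine subtlety to be this compatibility, which the lemma is tailored to supply; everything else reduces to reindexing the sums and products over the $\sim$-classes.
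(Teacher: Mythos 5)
Your proposal is correct, and it is genuinely more informative than the paper's own proof, which consists of the single line ``Apply \cite[Theorem 8.4]{wood}.'' You instead derive the symmetrized identity directly from the complete weight enumerator identity of Theorem~\ref{macrel}: you specialize $X_b\mapsto Y_{[b]}$, observe that this collapses $\cwe$ to $\swe$ on both sides, compute that the composite substitution carries $X_\alpha$ to $(S\cdot\mathbf{Y}_{\mathcal{A}})_\alpha$, and then invoke the preceding lemma ($\alpha\sim\alpha'\Rightarrow S_\alpha=S_{\alpha'}$) at exactly the point where it is needed --- to guarantee that the substituted values are constant on $\sim$-classes, so that the product over $b\in\mathcal{R}_k$ regroups into a product over $\mathcal{A}$ with exponents $\swc_\alpha(\mathbf{c})$. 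This is in fact the standard argument underlying Wood's Theorem 8.4, so your route is compatible with the paper's citation; what it buys is a self-contained reduction of Theorem~\ref{macrel2} to Theorem~\ref{macrel}, and it makes visible why the paper bothered to state and prove the row-equality lemma at all (the paper never explicitly uses it). One small bookkeeping point worth flagging: for the lemma to say anything nontrivial, $S_{\alpha,\beta}$ must be read as defined for \emph{all} rows $\alpha\in\mathcal{R}_k$ (not only $\alpha\in\mathcal{A}$), with the lemma then justifying the restriction to the $t\times t$ matrix indexed by $\mathcal{A}$; your argument implicitly and correctly uses it in that form.
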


\begin{proof}
Apply \cite[Theorem 8.4]{wood}.
\end{proof}

\section{Cyclic and Quasi-Cyclic Codes}

Let $C$ be a linear code of length $n$ over the ring $R.$ In this paper, we use the following definition of {\it quasi-cyclic} codes.

\begin{definition}\label{quasidef}
Let $n=md,$ for some $m$ and $d$ in $\mathbb{N}.$ Also, let $\mathbf{c}\in R^n,$ with $\mathbf{c}=\left(\mathbf{c}^{(1)}|\mathbf{c}^{(2)}|\cdots|\mathbf{c}^{(d)}\right),$ where $\mathbf{c}^{(i)}\in R^m,$ for all $i=1,2,\dots,d.$ Let $\sigma_d$ be a map from $R^n$ to $R^n$ such that $\sigma_d(\mathbf{c})=\left(\sigma\left(\mathbf{c}^{(1)}\right)|\sigma\left(\mathbf{c}^{(2)}\right)|\cdots|\sigma\left(\mathbf{c}^{(d)}\right)\right),$ where $\sigma$ is a cyclic shift from $R^m$ to $R^m.$ A code $C$ of length $n$ over ring $R$ is said to be a {\it quasi-cyclic} code with index $d$ if $\sigma_d(C)=C.$ 
\end{definition}

Note that, Definition~\ref{quasidef} is permutation equivalent to the usual definition of quasi-cyclic codes. Also, a code $C$ is said to be {\it cyclic} if its a quasi-cyclic code of index $d=1.$ We have the following characterization for quasi-cyclic codes over
the ring $\mathcal{R}_k.$

\begin{theorem}
A code $C$ of length $n$ over $\mathcal{R}_k$ is a quasi-cyclic code with index $d$ if and only if $C=\overline{\Psi}^{-1}(C_1,\dots,C_{2^k}),$
where $C_1,\dots,C_{2^k}$ are quasi-cyclic codes of length $n$ with index $d$ over $R.$
\label{quasi-cyclic}
\end{theorem}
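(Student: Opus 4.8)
The plan is to reduce the statement to the linear case already settled in Theorem~\ref{linearpsi}, together with a single commutation identity relating the quasi-cyclic shift on $\mathcal{R}_k^n$ to the quasi-cyclic shift on each of the $2^k$ component codes over $R$. Once that identity is in hand, both directions become formal transfers along the bijection $\overline{\Psi}$.

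First I would record the key observation that $\overline{\Psi}$ intertwines the two shifts. Since $\Psi$ is a ring isomorphism applied coordinate-by-coordinate in the definition of $\overline{\Psi}$, and since $\sigma_d$ merely permutes the $n$ positions (cyclically within each of the $d$ blocks) without touching the ring values, the two operations act on independent data. Writing $\Psi(a)=(\psi_1(a),\dots,\psi_{2^k}(a))$ for the coordinate projections, so that $C_j=\{(\psi_j(a_1),\dots,\psi_j(a_n)):(a_1,\dots,a_n)\in C\}$, one checks directly that the $j$-th component of $\overline{\Psi}(\sigma_d(\mathbf{a}))$ equals $\sigma_d$ applied to the $j$-th component of $\overline{\Psi}(\mathbf{a})$. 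In symbols, $\overline{\Psi}\circ\sigma_d=(\sigma_d\times\cdots\times\sigma_d)\circ\overline{\Psi}$, where on the right each $\sigma_d$ acts on one of the $2^k$ copies of $R^n$.

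For the forward direction, suppose $\sigma_d(C)=C$. A quasi-cyclic code is in particular linear, so Theorem~\ref{linearpsi} already supplies linear codes $C_1,\dots,C_{2^k}$ over $R$ with $C=\overline{\Psi}^{-1}(C_1,\dots,C_{2^k})$. Applying $\overline{\Psi}$ to $\sigma_d(C)=C$ and invoking the commutation identity yields $\sigma_d(C_j)=C_j$ for every $j$, because $\overline{\Psi}$ is a bijection and the components are acted on independently; hence each $C_j$ is quasi-cyclic of index $d$. For the converse, suppose each $C_j$ is quasi-cyclic of index $d$. Each is then linear, so $C=\overline{\Psi}^{-1}(C_1,\dots,C_{2^k})$ is linear by Theorem~\ref{linearpsi}. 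Feeding $\sigma_d(C_j)=C_j$ back through the same identity and using bijectivity of $\overline{\Psi}$ gives $\sigma_d(C)=C$.

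The only point requiring genuine care is verifying the commutation identity at the level of indices: one must confirm that the block-wise cyclic shift $\sigma_d$ is defined purely in terms of positions and therefore passes through the position-wise map $\overline{\Psi}$ unchanged, and that the labelling of the $2^k$ components is preserved by the shift. Everything else is a routine consequence of linearity and of the bijectivity of $\overline{\Psi}$ noted after its definition.
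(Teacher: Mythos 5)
Your proposal is correct and follows essentially the same route as the paper: both arguments rest on the commutation identity $\overline{\Psi}\circ\sigma_d=(\sigma_d\times\cdots\times\sigma_d)\circ\overline{\Psi}$ (the paper uses it implicitly in each direction, you state it explicitly) and then transfer the invariance condition across the bijection $\overline{\Psi}$, using Theorem~\ref{linearpsi} for the linear-code decomposition. The only cosmetic difference is that in the forward direction the paper tests membership via vectors of the form $\overline{\Psi}^{-1}(\mathbf{0},\dots,\mathbf{c},\dots,\mathbf{0})$ rather than applying $\overline{\Psi}$ to the set equality $\sigma_d(C)=C$, which amounts to the same thing.
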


\begin{proof}
($\Longrightarrow$) For any $i,$ take any $\mathbf{c}\in C_i.$ 
Since $C$ is a quasi-cyclic code of index $d,$, we have that
\[\overline{\Psi}^{-1}\left(\mathbf{0},\dots,\mathbf{0},\sigma_d(\mathbf{c}),\mathbf{0},\dots,\mathbf{0}\right)=\sigma_d\left(\overline{\Psi}^{-1}\left(\mathbf{0},\dots,\mathbf{0},\mathbf{c},\mathbf{0},\dots,\mathbf{0}\right)\right)\]
is also in $C.$ This gives $\sigma_d(\mathbf{c})\in C_i$ as we hope.\\[0.25cm]
($\Longleftarrow$) For any $\mathbf{w}$ in $C,$ there exist codewords $\mathbf{w}_1,\mathbf{w}_2,\dots,\mathbf{w}_{2^k},$ where $\mathbf{w}_i\in C_i,$ for all $1\leq i\leq 2^k,$ such that $\mathbf{w}=\overline{\Psi}^{-1}(\mathbf{w}_1,\dots,\mathbf{w}_{2^k}).$ Also, we have that
\[
\begin{array}{lll}
\sigma_d(\mathbf{w}) & = & \sigma_d\left(\overline{\Psi}^{-1}(\mathbf{w}_1,\dots,\mathbf{w}_{2^k})\right)\\
 & = & \overline{\Psi}^{-1}(\sigma_d(\mathbf{w}_1),\dots,\sigma_d(\mathbf{w}_{2^k})).
\end{array}
\]
Since $C_i$ is a quasi-cyclic code of index $d,$ we have $\sigma_d(\mathbf{w}_{i})$ is in $C_i,$ for all $i=1,2,\dots,2^k.$ So, $(\sigma_d(\mathbf{w}_1),\dots,\sigma_d(\mathbf{w}_{2^k}))$ is in $\overline{\Psi}(C).$ This means $\sigma_d(\mathbf{w})$ is in $C.$
\end{proof}

\begin{theorem}\label{cyclic}
A code $C$ of length $n$ over $\mathcal{R}_k$ is cyclic if and only if $C=\overline{\Psi}^{-1}(C_1,\dots,C_{2^k}),$
where $C_1,\dots,C_{2^k}$ are cyclic codes of length $n$ over $R.$
\end{theorem}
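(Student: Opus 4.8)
The plan is to derive this statement as the special case $d=1$ of Theorem~\ref{quasi-cyclic}. Recall from the remark following Definition~\ref{quasidef} that a code is \emph{cyclic} precisely when it is quasi-cyclic of index $d=1$: when $d=1$ one has $m=n$, the single block $\mathbf{c}^{(1)}$ is all of $\mathbf{c}$, and $\sigma_1$ reduces to the ordinary cyclic shift $\sigma$ (on $R^n$, and likewise on $\mathcal{R}_k^n$). Thus ``$C$ is cyclic'' is literally the condition $\sigma_1(C)=C$, and ``$C_i$ is cyclic over $R$'' is the condition $\sigma_1(C_i)=C_i$. With this identification in hand, I would simply invoke Theorem~\ref{quasi-cyclic} with $d=1$: that theorem yields $C=\overline{\Psi}^{-1}(C_1,\dots,C_{2^k})$ with each $C_i$ quasi-cyclic of index $1$ over $R$, which by the same remark is exactly the assertion that each $C_i$ is cyclic over $R$. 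No further computation is required.

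If instead a self-contained argument is preferred, I would reprove the two directions by mirroring the proof of Theorem~\ref{quasi-cyclic}, the only input being that $\overline{\Psi}$ commutes with the cyclic shift. Because $\overline{\Psi}$ acts coordinatewise (it applies $\Psi$ to each of the $n$ entries), permuting the coordinates cyclically is transparent to $\overline{\Psi}$, so that
\[
\sigma\!\left(\overline{\Psi}^{-1}(\mathbf{w}_1,\dots,\mathbf{w}_{2^k})\right)=\overline{\Psi}^{-1}\!\left(\sigma(\mathbf{w}_1),\dots,\sigma(\mathbf{w}_{2^k})\right).
\]
For the forward direction I would feed $\overline{\Psi}^{-1}(\mathbf{0},\dots,\mathbf{c},\dots,\mathbf{0})$ into this identity to force $\sigma(\mathbf{c})\in C_i$; for the converse I would apply it to an arbitrary $\mathbf{w}\in C$ written via $\overline{\Psi}^{-1}$ and use $\sigma(\mathbf{w}_i)\in C_i$ for each $i$.

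The honest assessment is that there is no genuine obstacle here: the theorem is an immediate corollary of Theorem~\ref{quasi-cyclic}. The only point deserving a sentence of care is the commutation of $\overline{\Psi}$ with $\sigma$, which rests on $\overline{\Psi}$ being defined entry-by-entry so that reindexing its coordinates leaves the map unaffected.
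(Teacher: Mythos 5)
Your proposal is correct and takes exactly the same route as the paper: the paper's entire proof of Theorem~\ref{cyclic} is ``Apply Theorem~\ref{quasi-cyclic} with $d=1$,'' which is your first paragraph. The extra remarks on why cyclic means quasi-cyclic of index $1$ and on $\overline{\Psi}$ commuting with the shift are sound but not needed beyond what the paper already records.
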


\begin{proof}
Apply Theorem~\ref{quasi-cyclic} with $d=1.$
\end{proof}

We also have the following characterization of quasi-cyclic codes.

\begin{theorem}\label{quasi-cyclic2}
A code $C$ of length $n$ over $\mathcal{R}_j$ is a quasi-cyclic code with index $d$ if and only if $\overline{\varphi}_j(C)$ is a quasi-cyclic code of length $nl_j$ with index $l_jd$ over $\mathcal{R}_{j-1}.$	
\end{theorem}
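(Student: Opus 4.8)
The plan is to reduce the statement to a single intertwining identity between the Gray map $\overline{\varphi}_j$ and the two quasi-cyclic shift operators, in the same spirit as the proof of Theorem~\ref{quasi-cyclic}. Concretely, I would establish
\[\overline{\varphi}_j \circ \sigma_d = \sigma_{l_j d}\circ \overline{\varphi}_j,\]
where on the left $\sigma_d$ is the shift of Definition~\ref{quasidef} acting on $\mathcal{R}_j^n$ (with $n=md$), and on the right $\sigma_{l_j d}$ is the corresponding shift acting on $\mathcal{R}_{j-1}^{nl_j}$ (here $nl_j = m\cdot(l_j d)$, so the block length is again $m$). Once this identity is in hand, both implications follow formally, so the whole argument rests on reading off the block structure of $\overline{\varphi}_j$ correctly.

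First I would record the block structure of the image. Writing a codeword as $\mathbf{a}=(a_1,\dots,a_n)$ with $a_i=\alpha_{i,1}+\alpha_{i,2}v_j$, the vector $\overline{\varphi}_j(\mathbf{a})$ splits into $l_j$ super-blocks of length $n$, where the $r$-th super-block is $(f_r(a_1),\dots,f_r(a_n))$ for the coordinate-wise $\mathcal{R}_{j-1}$-linear map $f_r(\alpha_1+\alpha_2 v_j)=\beta_r\alpha_1+\beta_r'\alpha_2$ (with $f_0(\alpha_1+\alpha_2 v_j)=\alpha_1$). The key observation is that each super-block is obtained from $\mathbf{a}$ by applying one fixed map $f_r$ position by position, so it commutes with any permutation of the $n$ coordinates. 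Since $n=md$, each super-block of length $n$ further subdivides into $d$ consecutive blocks of length $m$; stacking the $l_j$ super-blocks therefore partitions the $nl_j$ positions of $\overline{\varphi}_j(\mathbf{a})$ into exactly $l_j d$ blocks of length $m$, which is precisely the partition used by $\sigma_{l_j d}$.

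With this picture I would verify the intertwining identity blockwise. Because $\sigma$ is a cyclic shift, i.e. a permutation of the $m$ positions in each length-$m$ block, and $f_r$ acts coordinate-wise, the two operations commute: $f_r(\sigma(x_1,\dots,x_m))=\sigma(f_r(x_1),\dots,f_r(x_m))$. Applying this inside every super-block shows that the $r$-th super-block of $\overline{\varphi}_j(\sigma_d(\mathbf{a}))$ equals $\sigma_{l_j d}$ applied to the $r$-th super-block of $\overline{\varphi}_j(\mathbf{a})$, which gives the desired identity. The forward direction is then immediate: if $\sigma_d(C)=C$ then $\sigma_{l_j d}(\overline{\varphi}_j(C))=\overline{\varphi}_j(\sigma_d(C))=\overline{\varphi}_j(C)$, and $\overline{\varphi}_j(C)$ is linear by the earlier linearity theorem, so it is quasi-cyclic of index $l_j d$. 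For the converse, $\sigma_{l_j d}(\overline{\varphi}_j(C))=\overline{\varphi}_j(C)$ gives $\overline{\varphi}_j(\sigma_d(C))=\overline{\varphi}_j(C)$, and since $\overline{\varphi}_j$ is injective (Lemma~\ref{lemmagray}) this forces $\sigma_d(C)=C$.

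I expect the only genuine obstacle to be bookkeeping rather than mathematics: one must match the $l_j d$ length-$m$ blocks used by $\sigma_{l_j d}$ with the ``$l_j$ super-blocks, each split into $d$ sub-blocks'' decomposition of the image, and check that the chosen cyclic shift $\sigma$ is applied identically within each of them. Once the indexing conventions for $\overline{\varphi}_j$ and for $\sigma_d$ are pinned down, the commutation of each $f_r$ with the position permutation $\sigma$ is automatic, and no real computation remains.
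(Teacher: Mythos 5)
Your proposal is correct and takes essentially the same approach as the paper: the paper's proof likewise decomposes $\overline{\varphi}_j(\mathbf{c})$ into $l_j$ super-blocks, each split into $d$ length-$m$ blocks $\beta_r^{(i)}$, and verifies the intertwining identity $\overline{\varphi}_j(\sigma_d(\mathbf{c}))=\sigma_{l_jd}(\overline{\varphi}_j(\mathbf{c}))$ blockwise. If anything, your write-up is slightly more careful, since you make explicit the appeal to injectivity of $\overline{\varphi}_j$ (Lemma~\ref{lemmagray}) needed to pass from $\sigma_{l_jd}(\overline{\varphi}_j(C))=\overline{\varphi}_j(C)$ back to $\sigma_d(C)=C$, a step the paper leaves implicit.
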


\begin{proof}
	For any $\mathbf{c}'$ in $\overline{\varphi}_j(C),$ there exists $\mathbf{c}$ in $C$ such that $\overline{\varphi}_j(\mathbf{c})=\mathbf{c}'.$ Now, let $\mathbf{c}=\left(\alpha^{(1)}|\cdots|\alpha^{(d)}\right),$ where $\alpha^{(i)}=(\alpha_{i1}+\alpha_{i1}'v_j,\dots,\alpha_{im}+\alpha_{im}'v_j),$ for all $1\leq i\leq d.$ So, we have
	\[ \begin{array}{lll}
	\mathbf{c}' & = & \overline{\varphi}_j(\mathbf{c})\\
	 & = & \left(\beta_0^{(1)}|\cdots|\beta_0^{(d)}|\beta_1^{(1)}|\cdots|\beta_1^{(d)}|\cdots|\beta_{l_j-1}^{(1)}|\cdots|\beta_{l_j-1}^{(d)}\right),
	\end{array}\] 
	where $\beta_0^{(i)}=(\alpha_{i1},\dots,\alpha_{im}),$ for all $1\leq i\leq d,$ and 
	\[\beta_{r}^{(i)}=(\beta_r\alpha_{i1}+\beta_r'\alpha_{i1}',\dots,\beta_r\alpha_{im}+\beta_r'\alpha_{im}'),\]
	for all $r=1,\dots,l_j-1,$ $i=1,\dots,d.$ Consider,
	\[\begin{array}{lll}
	\overline{\varphi}_j(\sigma_d(\mathbf{c})) & = & \left(\sigma\left(\beta_0^{(1)}\right)|\cdots|\sigma\left(\beta_0^{(d)}\right)|\sigma\left(\beta_1^{(1)}\right)|\cdots|\sigma\left(\beta_1^{(d)}\right)|\cdots\right.\\
	& & \left.\cdots|\sigma\left(\beta_{l_j-1}^{(1)}\right)|\cdots|\sigma\left(\beta_{l_j-1}^{(d)}\right)\right)\\
	& = & \sigma_{l_jd}(\mathbf{c}').
	\end{array}\]
	Therefore, $\sigma_d(\mathbf{c})\in C$ if and only if $\sigma_{l_jd}(\mathbf{c}')\in \overline{\varphi}_j(C).$ 
\end{proof}

The following results are direct consequences of Theorem~\ref{quasi-cyclic2}.

\begin{theorem}
	A code $C$ of length $n$ over $\mathcal{R}_j$ is a cyclic code if and only if $\overline{\varphi}_j(C)$ is a quasi-cyclic code of length $nl_j$ with index $l_j$ over $\mathcal{R}_{j-1}.$
\end{theorem}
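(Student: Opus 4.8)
The plan is to obtain this statement as a direct specialization of Theorem~\ref{quasi-cyclic2} rather than to rerun the block-by-block shift computation. The key observation I would rely on is the convention fixed just after Definition~\ref{quasidef}: a code is \emph{cyclic} precisely when it is quasi-cyclic of index $d = 1$. With that identification in hand, the entire content of the claim is already packaged inside the previous theorem.

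Concretely, I would proceed in two short steps. First, I would restate the hypothesis: $C$ of length $n$ over $\mathcal{R}_j$ is cyclic, which by the definition above means exactly that $C$ is a quasi-cyclic code of index $d = 1$. Second, I would apply Theorem~\ref{quasi-cyclic2} with $d = 1$. That theorem asserts that $C$ is quasi-cyclic of index $d$ over $\mathcal{R}_j$ if and only if $\overline{\varphi}_j(C)$ is quasi-cyclic of length $nl_j$ and index $l_j d$ over $\mathcal{R}_{j-1}$; substituting $d = 1$ gives index $l_j \cdot 1 = l_j$, which is precisely the index claimed here. Since Theorem~\ref{quasi-cyclic2} is an ``if and only if'' statement, both directions follow simultaneously and no separate forward/backward argument is needed.

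I do not anticipate a genuine obstacle, since no new argument is required beyond the substitution; the only point that deserves a sentence of care is making the index bookkeeping explicit, namely that the image $\overline{\varphi}_j(C)$ has length $nl_j$ partitioned into $l_j$ blocks of length $n$, so that its quasi-cyclic index is $l_j$ rather than $n$. This is exactly the partition exhibited in the proof of Theorem~\ref{quasi-cyclic2}, where the shift $\sigma_{l_j d}$ acts on the $l_j d$ blocks $\beta_r^{(i)}$; specializing to $d = 1$ collapses the index $i$ and leaves the $l_j$ blocks $\beta_r^{(1)}$ for $r = 0, \dots, l_j - 1$. Invoking that theorem therefore already supplies the correct block structure, and the proof reduces to the single line of applying Theorem~\ref{quasi-cyclic2} at $d = 1$.
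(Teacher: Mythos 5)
Your proposal is correct and is exactly the paper's route: the paper presents this theorem as a direct consequence of Theorem~\ref{quasi-cyclic2}, obtained by specializing to $d=1$ under the stated convention that cyclic means quasi-cyclic of index $1$. Your index bookkeeping ($l_j$ blocks of length $n$ in the image) matches Definition~\ref{quasidef} and the block decomposition in the proof of Theorem~\ref{quasi-cyclic2}, so nothing further is needed.
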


\begin{corollary}
	A code $C$ of length $n$ over $\mathcal{R}_k$ is a quasi-cyclic code with index $d$ if and only if $\overline{\varphi}_1\circ\cdots\circ\overline{\varphi}_k(C)$ is a quasi-cyclic code of length $nl_1\cdots l_k$ with index $d\cdot l_1\cdots l_k$ over $R.$
\end{corollary}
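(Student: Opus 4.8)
The plan is to derive the statement by iterating Theorem~\ref{quasi-cyclic2} once for each index $j$, running from $j=k$ down to $j=1$, and tracking how the length and the quasi-cyclic index accumulate at every step. Since each instance of Theorem~\ref{quasi-cyclic2} is itself a biconditional, the resulting chain of equivalences immediately yields the ``if and only if'' assertion of the corollary, so there is no need to argue the two directions separately.

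Concretely, I would prove by descending induction on $j$ the following claim: for each $1 \le j \le k$, the code $C$ is quasi-cyclic of length $n$ with index $d$ over $\mathcal{R}_k$ if and only if
\[
\overline{\varphi}_j \circ \overline{\varphi}_{j+1} \circ \cdots \circ \overline{\varphi}_k(C)
\]
is a quasi-cyclic code of length $n\, l_k l_{k-1} \cdots l_j$ with index $d\, l_k l_{k-1}\cdots l_j$ over $\mathcal{R}_{j-1}$. The base case $j=k$ is precisely Theorem~\ref{quasi-cyclic2} applied with the ambient ring $\mathcal{R}_k$. For the inductive step, suppose the claim holds for some $j+1$, so that $D := \overline{\varphi}_{j+1}\circ\cdots\circ\overline{\varphi}_k(C)$ is quasi-cyclic of length $n\, l_k\cdots l_{j+1}$ with index $d\, l_k\cdots l_{j+1}$ over $\mathcal{R}_j$. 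Applying Theorem~\ref{quasi-cyclic2} to the code $D$ over $\mathcal{R}_j$ (i.e.\ with the theorem's running parameters set to this length and index) shows that $\overline{\varphi}_j(D)$ is quasi-cyclic of length $l_j \cdot (n\, l_k\cdots l_{j+1})$ with index $l_j \cdot (d\, l_k\cdots l_{j+1})$ over $\mathcal{R}_{j-1}$, which is exactly the claim for $j$.

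Taking $j=1$ in the claim gives the corollary: the composition $\overline{\varphi}_1\circ\cdots\circ\overline{\varphi}_k(C)$ is quasi-cyclic of length $n\, l_k\cdots l_1 = n\, l_1\cdots l_k$ with index $d\, l_k\cdots l_1 = d\, l_1\cdots l_k$ over $\mathcal{R}_0 = R$, using commutativity of the integer product. I do not expect a genuine obstacle here, since the argument is a routine telescoping induction. The one point that needs care is the bookkeeping: at each stage one must confirm that the intermediate image $\overline{\varphi}_{j+1}\circ\cdots\circ\overline{\varphi}_k(C)$ is genuinely a code over $\mathcal{R}_j$ (so that Theorem~\ref{quasi-cyclic2} is applicable at level $j$) and that the length and index factors multiply by exactly $l_j$ at each step, so that the accumulated factor after descending to level $1$ is $l_k l_{k-1}\cdots l_1$.
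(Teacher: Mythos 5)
Your proposal is correct and follows essentially the same route as the paper, whose proof is simply the instruction to apply Theorem~\ref{quasi-cyclic2} repeatedly to the successive images under $\overline{\varphi}_k, \overline{\varphi}_{k-1}, \dots, \overline{\varphi}_1$; your descending induction merely makes explicit the length and index bookkeeping that the paper leaves implicit.
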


\begin{proof}
	Apply Theorem~\ref{quasi-cyclic2} repeatedly while considering the image of $\overline{\varphi}_1\circ\cdots\circ\overline{\varphi}_k.$
\end{proof}

\begin{corollary}
	A code $C$ of length $n$ over $\mathcal{R}_k$ is a cyclic code if and only if $\overline{\varphi}_1\circ\cdots\circ\overline{\varphi}_k(C)$ is a quasi-cyclic code of length $nl_1\cdots l_k$ with index $l_1\cdots l_k$ over $R.$
\end{corollary}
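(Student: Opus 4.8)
The plan is to recognize the statement as the $d=1$ instance of the preceding corollary. Recall from Definition~\ref{quasidef} that a cyclic code is, by definition, a quasi-cyclic code of index $d=1$. So I would read the preceding corollary with $d=1$: a code $C$ over $\mathcal{R}_k$ is quasi-cyclic of index $1$---equivalently cyclic---if and only if $\overline{\varphi}_1\circ\cdots\circ\overline{\varphi}_k(C)$ is quasi-cyclic of index $1\cdot l_1\cdots l_k=l_1\cdots l_k$ over $R$. Since that corollary is already available, the proof reduces to citing it, and no further computation is needed.

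If a self-contained argument is preferred, I would instead iterate Theorem~\ref{quasi-cyclic2} directly. Beginning with a cyclic code $C$ over $\mathcal{R}_k$, viewed as quasi-cyclic of index $1$, the theorem gives that $\overline{\varphi}_k(C)$ is quasi-cyclic of index $l_k$ over $\mathcal{R}_{k-1}$. Applying it again with $j=k-1$ produces index $l_{k-1}l_k$ over $\mathcal{R}_{k-2}$, and continuing down to $\mathcal{R}_0=R$, each stage multiplies the index by one factor $l_j$. After all $k$ applications, $\overline{\varphi}_1\circ\cdots\circ\overline{\varphi}_k(C)$ is quasi-cyclic of index $l_1\cdots l_k$ over $R$. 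For the converse I would run the same equivalences backward, peeling off one map at a time and dividing the index by $l_j$ at each step, so that index $l_1\cdots l_k$ over $R$ forces index $1$ over $\mathcal{R}_k$, i.e.\ a cyclic code.

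I expect no real obstacle, since every step is an immediate instance of an already-established ``if and only if.'' The only point to watch is the index bookkeeping: I must confirm that the accumulated index is the \emph{product} $l_1\cdots l_k$ and not some other combination. This is guaranteed because Theorem~\ref{quasi-cyclic2} multiplies the index by exactly $l_j$ at each application and the starting index for a cyclic code is $1$.
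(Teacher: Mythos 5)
Your proposal is correct and matches the paper's intent exactly: the paper presents this corollary as a direct consequence of Theorem~\ref{quasi-cyclic2} (equivalently, the $d=1$ case of the preceding corollary, whose proof is ``apply Theorem~\ref{quasi-cyclic2} repeatedly''), which is precisely the reduction you describe. The iterated argument you sketch as a fallback, with the index multiplying by $l_j$ at each stage starting from index $1$, is the same bookkeeping the paper relies on.
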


\section{Skew-Cyclic and Quasi-Skew-Cyclic Codes}

Let $C$ be a code of length $n$ over the ring $\mathcal{R}_k.$ Given an atomorphism on the ring $\mathcal{R}_k,$ say $\theta,$ then $C$ is said to be a $\theta$-cyclic
code or skew-cyclic code if
\begin{itemize}
 \item[(1)] $C$ is a linear code over $\mathcal{R}_k,$ and
 \item[(2)] For any $c=(c_0,\dots,c_{n-1})$ in $C,$ we have that $T_\theta(c)=\left(\theta(c_{n-1}),\theta(c_{0}),\dots,\theta(c_{n-2})\right)$ is also in $C.$
\end{itemize}

Also, $C$ is said to be a quasi-$\theta$-cyclic code of index $d$ if
\begin{itemize}
 \item[(1)] $C$ is a linear code over $\mathcal{R}_k,$ and
 \item[(2)] For any $c=(c_0,\dots,c_{n-1})$ in $C,$ we have that $T^d_\theta(c)=\left(\theta(c_{n-d}),\theta(c_{n-d+1}),\dots,\theta(c_{n-d-1})\right)$
	    is also in $C.$
\end{itemize}

Let $T$ be a cyclic-shift operator on $R^{n2^k}.$ We have the following characterizations.

\begin{theorem}
A code $C$ over $\mathcal{R}_k$ is a quasi-$\theta$-cyclic code of index $d$ if and only if
$T^{d2^k}\circ\Sigma_S\circ \Phi_{S_1,S_2}(\overline{\Psi}(C))\subseteq \overline{\Psi}(C),$
for some $S,S_1,S_2\subseteq \{1,2,\dots,k\},$ where $|S_1|=|S_2|.$
\label{quasi-skew}
\end{theorem}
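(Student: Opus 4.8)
The plan is to reduce the statement to a single intertwining identity that carries the skew-shift operator $T^d_\theta$ on $\mathcal{R}_k^n$ over to the plain cyclic shift $T$ on $R^{n2^k}$ through the bijection $\overline{\Psi}$, and then to read off the invariance condition using that $\overline{\Psi}$ is a bijection. Throughout I take $\theta=\Theta_S\circ\Phi_{S_1,S_2}$, the general form permitted in Section~\ref{grayandauto}; the factor $\Phi_{S_1,S_2}$ appearing in the statement is to be understood as its induced counterpart $\Gamma_{S_1,S_2}$ on $R^{2^k\times n}$, since that is the space on which the right-hand side acts.

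First I would factor $T^d_\theta$. Writing $\theta^{(n)}$ for the coordinatewise application of $\theta$ to a word in $\mathcal{R}_k^n$ and $\tau_d$ for the pure cyclic shift by $d$ positions, the definition $T^d_\theta(c_0,\dots,c_{n-1})=(\theta(c_{n-d}),\dots,\theta(c_{n-d-1}))$ gives $T^d_\theta=\theta^{(n)}\circ\tau_d=\tau_d\circ\theta^{(n)}$, because applying $\theta$ in each slot commutes with permuting the slots. Next I push $\overline{\Psi}$ through each factor. For the automorphism part, the defining relations $\overline{\Psi}\circ\Theta_S=\Sigma_S\circ\overline{\Psi}$ and $\overline{\Psi}\circ\Phi_{S_1,S_2}=\Gamma_{S_1,S_2}\circ\overline{\Psi}$ from Section~\ref{grayandauto} compose to give $\overline{\Psi}\circ\theta^{(n)}=\Sigma_S\circ\Gamma_{S_1,S_2}\circ\overline{\Psi}$. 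For the shift part, because $\overline{\Psi}$ lays out the images $\Psi(c_0),\dots,\Psi(c_{n-1})$ as $n$ consecutive blocks of length $2^k$, a one-position shift of the blocks becomes a shift by $2^k$ positions in $R^{n2^k}$, so that $\overline{\Psi}\circ\tau_d=T^{d2^k}\circ\overline{\Psi}$.

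The remaining bookkeeping step is to commute $\Sigma_S\circ\Gamma_{S_1,S_2}$ past $T^{d2^k}$. Since both $\Sigma_S$ and $\Gamma_{S_1,S_2}$ are induced by coordinatewise automorphisms of $\mathcal{R}_k$, they act block-by-block on $R^{2^k\times n}$ and perform exactly the same operation on each of the $n$ blocks, whereas $T^{d2^k}$ only permutes whole blocks cyclically; two maps of these two kinds commute. Combining the three relations yields the master identity
\[\overline{\Psi}\circ T^d_\theta=T^{d2^k}\circ\Sigma_S\circ\Gamma_{S_1,S_2}\circ\overline{\Psi}.\]
Because $\overline{\Psi}$ is a bijection, $T^d_\theta(C)\subseteq C$ holds if and only if $\overline{\Psi}(T^d_\theta(C))\subseteq\overline{\Psi}(C)$, which by the identity is exactly $T^{d2^k}\circ\Sigma_S\circ\Gamma_{S_1,S_2}(\overline{\Psi}(C))\subseteq\overline{\Psi}(C)$; the linearity clause in the definition of a quasi-$\theta$-cyclic code transfers across $\overline{\Psi}$ by Theorem~\ref{linearpsi}, so both directions of the equivalence follow simultaneously.

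I expect the main obstacle to be the careful verification of the two commutation facts: that $\overline{\Psi}$ converts a single-slot cyclic shift into a shift by $2^k$ (which requires fixing the block-layout convention and the direction of the shift so that the exponent comes out as precisely $d2^k$), and that the block-local maps $\Sigma_S$ and $\Gamma_{S_1,S_2}$ genuinely commute with the block-level shift $T^{d2^k}$. Everything else is a mechanical composition of the intertwining relations already recorded in Section~\ref{grayandauto}.
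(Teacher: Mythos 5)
Your proposal is correct and follows essentially the same route as the paper: both establish the intertwining identity $\overline{\Psi}\circ T^d_\theta=T^{d2^k}\circ\Sigma_S\circ\Gamma_{S_1,S_2}\circ\overline{\Psi}$ (the paper states it directly after noting that a shift by $d$ blocks becomes a shift by $d2^k$ coordinates, while you derive it by factoring $T^d_\theta=\tau_d\circ\theta^{(n)}$ and checking the commutations) and then conclude by bijectivity of $\overline{\Psi}$. Your version merely makes explicit the bookkeeping the paper leaves implicit, including the reading of $\Phi_{S_1,S_2}$ in the statement as its induced map $\Gamma_{S_1,S_2}$.
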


\begin{proof}
Let $c=(c_0,c_1,\dots,c_{n-1})$ be any element in $C.$ We can see that
\[\overline{\Psi}(c_{n-d},c_{n-d+1},\dots,c_{n-d-1})=T^{d2^k}(\overline{\Psi}(c_0,\dots,c_{n-1})).\]
Since $\theta$ is a composition of $\Theta_S$ and $\Phi_{S_1,S_2},$ for some $S,S_1,S_2\subseteq\{1,2,\dots,k\},$ we have that
\[\overline{\Psi}(T^d_\theta(c))=T^{d2^k}\left(\Sigma_S\left(\Gamma_{S_1,S_2}\left(\overline{\Psi}(c)\right)\right)\right).\]
Therefore,  $C$ is invariant under the action of $T^d_\theta$ if and only if $\overline{\Psi}(C)$ invariant under the action of
$T^{d2^k}\circ\Sigma_S\circ\Gamma_{S_1,S_2}.$
\end{proof}

\begin{theorem}
A code $C$ over $\mathcal{R}_k$ is a $\theta$-cyclic code if and only if $T^{2^k}\circ\Sigma_S\circ \Phi_{S_1,S_2}(\overline{\Psi}(C))\subseteq \overline{\Psi}(C),$
for some $S,S_1,S_2\subseteq \{1,2,\dots,k\},$ where $|S_1|=|S_2|.$
\label{skew}
\end{theorem}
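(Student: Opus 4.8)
The plan is to treat this statement as the index-one case of the preceding result, Theorem~\ref{quasi-skew}, so that essentially no new argument is needed. The first step is to observe that a $\theta$-cyclic code is, by definition, exactly a quasi-$\theta$-cyclic code of index $d=1$: substituting $d=1$ into the shift $T_\theta^d(c)=(\theta(c_{n-d}),\theta(c_{n-d+1}),\dots,\theta(c_{n-d-1}))$ recovers precisely $T_\theta(c)=(\theta(c_{n-1}),\theta(c_0),\dots,\theta(c_{n-2}))$, which is the operator appearing in condition~(2) of the definition of a skew-cyclic code. Hence invariance of $C$ under $T_\theta$ and invariance under $T_\theta^{1}$ are one and the same condition.

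Given this identification, the second step is simply to invoke Theorem~\ref{quasi-skew} with $d=1$. That theorem characterizes quasi-$\theta$-cyclic codes of index $d$ by the containment $T^{d2^k}\circ\Sigma_S\circ\Phi_{S_1,S_2}(\overline{\Psi}(C))\subseteq\overline{\Psi}(C)$; setting $d=1$ collapses the exponent to $T^{2^k}$ and yields $T^{2^k}\circ\Sigma_S\circ\Phi_{S_1,S_2}(\overline{\Psi}(C))\subseteq\overline{\Psi}(C)$, which is exactly the asserted equivalence. The factor $2^k$ in the exponent is forced by the structure of $\overline{\Psi}$: each coordinate of $\mathcal{R}_k^n$ is expanded into $2^k$ coordinates over $R$, so a single skew-cyclic shift on $\mathcal{R}_k^n$ corresponds, after applying $\overline{\Psi}$, to shifting $R^{n2^k}$ by $2^k$ positions, i.e. to $T^{2^k}$.

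Since the result is a clean specialization, there is no genuine obstacle. The only point worth keeping straight is the same notational convention already adopted in Theorem~\ref{quasi-skew}, namely that the automorphism $\Phi_{S_1,S_2}$ on $\mathcal{R}_k$ transports under $\overline{\Psi}$ to its induced counterpart $\Gamma_{S_1,S_2}$ on $R^{n2^k}$ (and likewise $\Theta_S$ to $\Sigma_S$). Should a self-contained proof be preferred over citing the previous theorem, one would simply reproduce the computation from the proof of Theorem~\ref{quasi-skew} verbatim with $d=1$, verifying that $\overline{\Psi}(T_\theta(c))=T^{2^k}\left(\Sigma_S\left(\Gamma_{S_1,S_2}\left(\overline{\Psi}(c)\right)\right)\right)$ for every $c\in C$, using that $\theta$ is a composition of $\Theta_S$ and $\Phi_{S_1,S_2}$. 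Either route delivers the claim immediately.
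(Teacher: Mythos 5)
Your proposal is correct and matches the paper's proof exactly: the paper also simply applies Theorem~\ref{quasi-skew} with $d=1$. Your additional remarks verifying that a $\theta$-cyclic code is precisely a quasi-$\theta$-cyclic code of index $1$ and explaining the exponent $2^k$ are sound elaborations of the same one-line specialization.
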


\begin{proof}
Apply Theorem~\ref{quasi-skew} with $d=1.$
\end{proof}

We can also have more technical characterizations as follow.

\begin{theorem}
A linear code $C$ over $\mathcal{R}_k$ is quasi-$\theta$-cyclic of index $d$ and length $n$ if and only if
there exist quasi-$\vartheta$-cyclic codes $C_1,C_2,\dots,C_{2^k}$
of length $n$ over $R$
with index $d\cdot\ord(\tilam),$ such that
\[
C=\overline{\Psi}_k^{-1}(C_1,C_2,\dots, C_{2^k})
\]
where $\vartheta$ is an automorphism in $R,$ and
$T^d_{\tilde{\theta}}(C_i)\subseteq C_j,$ where $j\in S\cup S_2,$ for all
$i=1,2,\dots,2^k.$
\label{quasi-skew-2}
\end{theorem}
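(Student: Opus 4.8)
The plan is to combine the block-level characterization of Theorem~\ref{quasi-skew} with the product decomposition of Theorem~\ref{linearpsi}, and then unwind what the induced maps $\Sigma_S$ and $\Gamma_{S_1,S_2}$ do to the individual components $C_i$. By Theorem~\ref{linearpsi} I may write the linear code $C$ as $C=\overline{\Psi}^{-1}(C_1,\dots,C_{2^k})$ with each $C_i$ a linear code of length $n$ over $R$; equivalently, after deinterleaving the $2^k$ coordinates of $\Psi$, one has $\overline{\Psi}(C)=C_1\times\cdots\times C_{2^k}$ as a product code over $R$. By Theorem~\ref{quasi-skew}, $C$ is quasi-$\theta$-cyclic of index $d$ precisely when $\overline{\Psi}(C)$ is invariant under $T^{d2^k}\circ\Sigma_S\circ\Gamma_{S_1,S_2}$, so the whole problem reduces to reading off this single invariance condition in the deinterleaved coordinates.

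First I would compute the induced actions explicitly. Writing $\Psi(\alpha)_T$ for the coordinate of $\Psi(\alpha)$ indexed by $T\subseteq\{1,\dots,k\}$ (the evaluation $v_i\mapsto 1$ for $i\in T$ and $v_i\mapsto 0$ otherwise, which recovers $\sum_{S\subseteq T}\alpha_S$), one checks from $\Theta_i(v_i)=1-v_i$ that $\Sigma_S$ permutes the coordinates by the rule $T\mapsto T\,\triangle\,S$ (symmetric difference), while from $\Phi_{S_1,S_2}(\alpha v_j)=\vartheta(\alpha)v_{\tilam(j)}$ one checks that $\Gamma_{S_1,S_2}$ permutes the coordinates according to the subset map induced by $\tilam$ and applies $\vartheta$ entrywise. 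Since the interleaving shift $T^{d2^k}$ is exactly the $d$-fold cyclic shift applied simultaneously in each deinterleaved block, and $\vartheta$ (applied coordinatewise) commutes with the cyclic shift, the composite $T^{d2^k}\circ\Sigma_S\circ\Gamma_{S_1,S_2}$ acts on the tuple $(C_1,\dots,C_{2^k})$ by sending the $i$-th component to $T^d_\vartheta(C_i)$ and then permuting the components by a single permutation $\rho$ of the index set built from $\tilam$ and the toggle $T\mapsto T\,\triangle\,S$.

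With this in hand, the invariance $T^{d2^k}\circ\Sigma_S\circ\Gamma_{S_1,S_2}(\overline{\Psi}(C))\subseteq\overline{\Psi}(C)$ becomes, because $\overline{\Psi}(C)$ is a product code, the family of componentwise inclusions $T^d_\vartheta(C_i)\subseteq C_{\rho(i)}$ for all $i$ — which is precisely the condition $T^d_{\tilde\theta}(C_i)\subseteq C_j$ recorded in the statement, with $j=\rho(i)$ lying in the indices governed by $S$ and $S_2$. Iterating this inclusion along the orbit of $\rho$, and using that $\rho$ returns each index to itself after $\ord(\tilam)$ steps, yields $T^{d\,\ord(\tilam)}_\vartheta(C_i)\subseteq C_i$, so each $C_i$ is quasi-$\vartheta$-cyclic of index $d\cdot\ord(\tilam)$. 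Conversely, given linear codes $C_1,\dots,C_{2^k}$ satisfying the inclusions $T^d_\vartheta(C_i)\subseteq C_{\rho(i)}$, reassembling them shows $\overline{\Psi}(C)$ is invariant under $T^{d2^k}\circ\Sigma_S\circ\Gamma_{S_1,S_2}$, and Theorem~\ref{quasi-skew} returns that $C$ is quasi-$\theta$-cyclic of index $d$; this settles both directions.

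The main obstacle I expect is the bookkeeping in the second step: identifying the single block permutation $\rho$ as the composite of the $\tilam$-induced subset permutation with the symmetric-difference toggle $T\mapsto T\,\triangle\,S$, tracking the order of composition of $\Sigma_S$, $\Gamma_{S_1,S_2}$ and $T^{d2^k}$ correctly, and verifying that the order of $\rho$ — which controls the index $d\cdot\ord(\tilam)$ — is indeed $\ord(\tilam)$. Care is needed here because the toggle by $S$ is a translation on the index group $(\mathbb{Z}/2)^{k}$, so the identification of $\ord(\rho)$ with $\ord(\tilam)$ is immediate when $S=\emptyset$ but in general relies on the accumulated translation $\bigoplus_{t}\tilam^{t}(S)$ vanishing once $t$ reaches $\ord(\tilam)$; establishing this alignment is the delicate heart of the argument, while the rest is the routine product-code and commutation verification.
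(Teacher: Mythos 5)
Your proposal follows essentially the same route as the paper: decompose $C$ as $\overline{\Psi}^{-1}(C_1,\dots,C_{2^k})$, observe that the skew shift $T^d_\theta$ acts on the components by applying $T^d_\vartheta$ and permuting the indices according to $\theta$, read off the inclusions $T^d_\vartheta(C_i)\subseteq C_j$, and iterate along the orbit of the index permutation to obtain the index $d\cdot\ord(\tilam)$; the paper carries out this computation directly on codewords of the form $\overline{\Psi}^{-1}(0,\dots,0,c_i,0,\dots,0)$ rather than routing through Theorem~\ref{quasi-skew}, but the content is identical. The one point you flag as delicate --- that the block permutation $\rho$, the composite of the $\tilam$-induced subset map with the toggle $T\mapsto T\,\triangle\,S$, really has order $\ord(\tilam)$ --- is precisely the step the paper asserts without justification (``by continuing this process''), and it is a genuine issue: when $\tilam$ is the identity and $S\neq\emptyset$ the toggle alone has order $2$, so the claimed index $d\cdot\ord(\tilam)$ depends on the cancellation of the accumulated translation that you correctly identify but, like the paper, do not establish.
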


\begin{proof}
$(\Longrightarrow)$ Remember that there exist codes over $R,$ $C_1,C_2,\dots,C_{2^k},$ such that,
\[C=\overline{\Psi}_k^{-1}(C_1,C_2,\dots,C_{2^k}).\]
For any $c_i\in C_i,$ let $c_i=(\alpha_1,\dots, \alpha_{n}).$ If,
$c=\overline{\Psi}_k^{-1}(0,\dots,0,c_i,0,\dots,0),$ then
\[
\left(\alpha_1v_{S_i}-\sum_{A\supsetneq S_i}\alpha_1v_A,\dots,
\alpha_nv_{S_i}-\sum_{A\supsetneq S_i}\alpha_nv_A\right).
\]
So, if we consider
\[
\overline{\Psi}_k(T_{\theta}^{dt_1}(c))=(0,\dots,0,T_{\vartheta}^{dt_1}(c_i),0,\dots,0),
\]
then we have
$T_{\vartheta}^d(c_i)$ is in $C_j,$ where $j\in S\cup S_2.$
By continuing this process, we have $T^{d\cdot\ord(\tilam)}_{\vartheta}(c_i)\in C_i,$
which means, $C_i$ is quasi-$\vartheta$-cyclic code over $R$
with index $d\cdot\ord(\tilam),$ for all
$i=1,\dots,2^k.$\\

$(\Longleftarrow)$ For any $c\in C,$ we can see that
$\overline{\Psi}_k(c)\in (C_1,\dots,C_{2^k}).$
Since $C_i$ is quasi-$\vartheta$-cyclic
code over $R$
with index $d\cdot\ord(\tilam),$ for all
$i=1,\dots,2^k,$ $C_1,$
and $T_{\vartheta}^{dt_1}(C_i)\subseteq C_j,$ where $j\in S\cup S_2,$ for all
$i=1,2,\dots,2^k,$ where $1\leq t_1\leq 2^k.$
Then we have  $T_{\theta}^d(c)=\overline{\Psi}_k^{-1}(T_{\vartheta}(\Psi_k(c)))\in C,$ as we hope.
\end{proof}

\begin{theorem}
A linear code $C$ over $\mathcal{R}_k$ is $\theta$-cyclic of length $n$ if and only if
there exist quasi-$\vartheta$-cyclic codes $C_1,C_2,\dots,C_{2^k}$
of length $n$ over $R$
with index $\ord(\tilam),$ such that
\[
C=\overline{\Psi}_k^{-1}(C_1,C_2,\dots, C_{2^k})
\]
where $\vartheta$ is an automorphism in $R,$ and
$T_{\tilde{\theta}}(C_i)\subseteq C_j,$ where $j\in S\cup S_2,$ for all
$i=1,2,\dots,2^k.$
\label{skew-2}
\end{theorem}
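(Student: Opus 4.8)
The plan is to obtain this result as the $d=1$ specialization of Theorem~\ref{quasi-skew-2}, mirroring how Theorem~\ref{skew} was derived from Theorem~\ref{quasi-skew} and how Theorem~\ref{cyclic} was derived from Theorem~\ref{quasi-cyclic}. The first step is to confirm that being $\theta$-cyclic is literally the same as being quasi-$\theta$-cyclic of index $d=1$. Comparing the two defining operators, the quasi-cyclic shift $T^d_\theta(c)=(\theta(c_{n-d}),\theta(c_{n-d+1}),\dots,\theta(c_{n-d-1}))$ collapses, when $d=1$ and indices are read modulo $n,$ to $(\theta(c_{n-1}),\theta(c_0),\dots,\theta(c_{n-2}))=T_\theta(c),$ which is exactly the $\theta$-cyclic shift in condition (2) of the definition. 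Hence a linear code $C$ over $\mathcal{R}_k$ is $\theta$-cyclic if and only if it is quasi-$\theta$-cyclic of index $1.$

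Second, I would invoke Theorem~\ref{quasi-skew-2} with $d=1.$ That theorem characterizes a quasi-$\theta$-cyclic code of index $d$ as $C=\overline{\Psi}_k^{-1}(C_1,\dots,C_{2^k}),$ where each $C_i$ is quasi-$\vartheta$-cyclic of index $d\cdot\ord(\tilam)$ and $T^d_{\tilde{\theta}}(C_i)\subseteq C_j$ for $j\in S\cup S_2.$ Substituting $d=1$ turns the index $d\cdot\ord(\tilam)$ into $\ord(\tilam)$ and turns the operator $T^d_{\tilde{\theta}}$ into $T_{\tilde{\theta}},$ which is precisely the asserted statement. Both directions of the biconditional transfer without extra work, because the equivalence established in the first step is an identity of the underlying shift maps rather than merely a one-sided implication.

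The only point requiring care, and the closest thing to an obstacle, is the bookkeeping on the index. Since $\theta$-cyclicity corresponds to $d=1,$ one must be sure to report the component codes $C_i$ as quasi-$\vartheta$-cyclic with index $\ord(\tilam)$ rather than as ordinary cyclic codes. This reflects the fact that even a single skew-cyclic shift on $\mathcal{R}_k,$ after passing through $\overline{\Psi},$ induces a shift that permutes the $2^k$ coordinate codes according to $\tilam,$ and only returns each $C_i$ to itself after $\ord(\tilam)$ applications. Verifying that this cycle length is exactly $\ord(\tilam)$ is already performed inside the proof of Theorem~\ref{quasi-skew-2}, so no new argument beyond the substitution $d=1$ is needed here.
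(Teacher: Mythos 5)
Your proposal is correct and matches the paper's proof exactly: the paper's entire argument is ``Apply Theorem~\ref{quasi-skew-2} with $d=1$,'' and your additional check that $\theta$-cyclicity coincides with quasi-$\theta$-cyclicity of index $1$ is the right (implicit) justification for that specialization.
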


\begin{proof}
Apply Theorem~\ref{quasi-skew-2} with $d=1.$
\end{proof}

Theorem~\ref{quasi-skew-2} gives us an algorithm to construct quasi-skew-cyclic codes over the ring $B_k$ as follows.
\vspace{0.6cm}
\begin{alg}
Given $n, d,$ the ring $\mathcal{R}_k,$ and an automorphism $\theta.$

\begin{itemize}
 \item[(1)] Decompose $\theta$ to be $\theta=\Theta_{S}\circ\Phi_{S_1,S_2}.$

\item[(2)] Determine $\ord(\tilam)$ and $\vartheta.$

\item[(3)] Choose quasi-$\vartheta$-cyclic codes over $R,$ say $C_1,\dots,C_{2^k},$ such that
\[
T_{\tilde{\theta}}^{dt_1}(C_i)\subseteq C_{j},
\]where $j\in S\cup S_2,$ for all $i=1,2,\dots,2^k.$

\item[(4)] Calculate $C=\overline{\Psi}_k^{-1}(C_1,\dots,C_{2^k}).$

\item[(5)] $C$ is a quasi-$\theta$-cyclic code of index $d$ over the ring $\mathcal{R}_k.$
\end{itemize}
\label{algorithm}
\end{alg}
Note that Algorithm~\ref{algorithm} can be used to construct skew-cyclic code over $\mathcal{R}_k$ by choosing $d=1.$

\section{Examples}
\subsection{Examples using the map $\Psi$}

%

As a direct consequence of Theorem~\ref{linearpsi}, we have that for any code $C$ of length $n$ over $\mathcal{R}_k=\mathbb{Z}_m[v_1,v_2,\dots,v_k],$ where $v_i^2=v_i,$ for all $i=1,2,\dots,k,$ there exist codes $C_1,C_2,\dots,C_{2^k}$ of length $n$ over $\mathbb{Z}_m$ such that $C=\overline{\Psi}^{-1}(C_1,C_2,\dots,C_{2^k}).$ 

\begin{example}
Let $\mathcal{R}_1=\mathbb{Z}_4[v],$ where $v^2=v.$ Also, let $C=\langle (1\;v\;1+v\;3)\rangle.$ We can check that
\[\overline{\Psi}((1\;v\;1+v\;3))=\left(\begin{array}{cccc}
1 & 0 & 1 & 3 \\
1 & 1 & 2 & 3
\end{array}\right).\]
Then, if we choose $C_1=\langle (1\;0\;1\;3)\rangle$ and $C_2=\langle (1\;1\;2\;3)\rangle,$ we have $C=\overline{\Psi}^{-1}(C_1,C_2).$
\end{example}

Moreover, we can have more explicit example for Hermitian self-dual codes as follow.

\begin{example}
Let $\mathcal{R}_1=\mathbb{Z}_4[v],$ where $v^2=v.$ In this ring, $\Theta_1(v)=1-v.$ Let $C=\langle(v\;v\;v)\rangle$ be a code over $\mathcal{R}_1.$ By Proposition~\ref{hermitianconstruct}, $C$ is a Hermitian self-dual code. Since
\[\overline{\Psi}((v\;v\;v))=\left(\begin{array}{ccc}
1 & 1 & 1  \\
1 & 1 & 1
\end{array}\right),\] we have that $C=\overline{\Psi}^{-1}(C_1,C_2),$ where $C_1=C_2=\langle (1\;1\;1)\rangle.$ We can check that $C_1$ is an Euclidean self-dual code over $\mathbb{Z}_4.$ Therefore, we have $C_2=C_1^\perp,$ as stated in Proposition~\ref{hermitian} and Theorem~\ref{hermitian2}.
\end{example}

Also, we have the following example for Euclidean self-dual codes.

\begin{example}
Let $\mathcal{R}_1=\mathbb{Z}_4[v],$ where $v^2=v.$ Take $C=\langle (v\; 1-v),(1-v\; v)\rangle.$ We can see that $C$ is an Euclidean self-dual code over $\mathcal{R}_1.$ Also, we know that
\[\overline{\Psi}((v\;1-v))=\left(\begin{array}{cc}
0 & 1  \\
1 & 0
\end{array}\right),\]
and
\[\overline{\Psi}((1-v\;v))=\left(\begin{array}{cc}
1 & 0  \\
0 & 1
\end{array}\right).\] If we take $C_1=C_2=\langle(1\;0),(0\;1)\rangle,$ then we have $C=\overline{\Psi}^{-1}(C_1,C_2).$ We can check that $C_1$ and $C_2$ are Euclidean self-dual codes over $\mathbb{Z}_4$ also, as stated in Theorem~\ref{euclid}.
\end{example}

\subsection{Codes over $\mathbb{Z}_4$}

In this part, we will use the map $\varphi_1$ to get codes over $\mathbb{Z}_4$ from codes over $\mathcal{R}_1=\mathbb{Z}_4+v\mathbb{Z}_4,$ where $v^2=v.$ For any element $\mathbf{x}=(x_1,\dots,x_n)$ in  $\mathbb{Z}_4^n,$ Lee weight of $\mathbf{x},$ denoted by $w_L(\mathbf{x}),$ as

\begin{equation}
w_L(\mathbf{x})=\sum_{i=1}^n \min\{|x_i|,|4-x_i|\}.
\end{equation} 
Using the above weight, we define Lee distance of a code $C$ as 
\[d_L(C)=\min_{\mathbf{c}\in C\atop \mathbf{c}\not=\mathbf{0}}w_L(\mathbf{c}).\]
 We will give some examples of codes over $\mathbb{Z}_4$ with maximum Lee distance so far, as in \url{http://www.asamov.com/Z4Codes/CODES/ShowCODESTablePage.aspx}, constructed using the map $\varphi_1.$  

\begin{example}\label{exz41}
	Define a map $\varphi_1$ as follows.
	\[\begin{array}{llll}
		\varphi_1 & \mathbb{Z}_4+v\mathbb{Z}_4 & \longrightarrow & \mathbb{Z}_4^2\\
				  & \alpha+v\beta & \longmapsto & (\alpha,2\alpha+\beta).
	\end{array}\]
	Let $C=\langle 1+v\rangle=\{0, 1+v, 2+2v, 3+3v, 2v, 2, 1+3v, 3+v\}$ be a code of length 1 over $\mathcal{R}_1=\mathbb{Z}_4+v\mathbb{Z}_4,$ where $v^2=v.$ We have, 
	\[\varphi_1(1+v)=(1,3),\quad \varphi_1(2+2v)=(2,2),\quad \varphi_1(3+3v)=(3,1),\quad \varphi_1(2v)=(0,2),\]
	\[\varphi_1(2)=(2,0),\quad \varphi_1(1+3v)=(1,1),\quad \varphi_1(3+v)=(3,3).\]
	We can see that $d_L(\varphi_1(C))=2$ and $|\varphi_1(C)|=8.$ 
\end{example}

\begin{example}\label{exz42}
	Define a map $\varphi_1$ as follows.
	\[\begin{array}{llll}
	\varphi_1 & \mathbb{Z}_4+v\mathbb{Z}_4 & \longrightarrow & \mathbb{Z}_4^3\\
	& \alpha+v\beta & \longmapsto & (\alpha,\beta,\alpha+\beta).
	\end{array}\]
	Let $C=\langle 2\rangle=\{0, 2, 2v, 2+2v\}.$ We have that
	\[\varphi_1(2)=(2,0,2),\quad \varphi_1(2v)=(0,2,2),\quad \varphi_1(2+2v)=(2,2,0).\]
	So, $d_L(\varphi_1(C))=4$ and $|\varphi_1(C)|=4.$
\end{example}

\begin{example}\label{exz43}
	Define a map $\varphi_1$ as follows.
	\[\begin{array}{llll}
	\varphi_1 & \mathbb{Z}_4+v\mathbb{Z}_4 & \longrightarrow & \mathbb{Z}_4^5\\
	& \alpha+v\beta & \longmapsto & (\alpha,\beta,\alpha+\beta,\alpha,\alpha+\beta).
	\end{array}\]
	Let $C=\langle 2\rangle.$ We can see that,
	\[\varphi_1(2)=(2,0,2,0,2),\quad \varphi_1(2v)=(0,2,2,0,2),\quad \varphi_1(2+2v)=(2,2,0,2,0).\]
	Therefore, we have $d_L(\varphi_1(C))=6$ and $|\varphi_1(C)|=4.$
\end{example}
The following table gives some examples of codes over $\mathbb{Z}_4$ obtained by a similar way as in Example~\ref{exz41}-\ref{exz43}.

\begin{table}[h]
	\centering
	\begin{tabular}{|ccccc|}
	 \hline\hline
	 $n$ & $C$ & $\varphi_1$ & $d_L(\varphi_1(C))$ & $|\varphi_1(C)|$ \\ [0.5ex]
	 \hline\hline
	 2 & $\langle 1+v\rangle$ & $\alpha+v\beta\mapsto (\alpha,2\alpha+\beta)$ & 2 & 8\\
	 2 & $\langle 2\rangle$ & $\alpha+v\beta\mapsto (\alpha,\alpha+\beta)$ & 2 & 4 \\
	 3 & $\langle 2\rangle$ & $\alpha+v\beta\mapsto (\alpha,\beta,\alpha+\beta)$ & 4 & 4 \\
	 3 & $\langle 2+2v\rangle$ & $\alpha+v\beta\mapsto (\alpha,\beta,\alpha+\beta)$ & 4 & 2 \\
	 3 & $\langle 2v\rangle$ & $\alpha+v\beta\mapsto (\alpha,\beta,\alpha+\beta)$ & 4 & 2 \\
	 4 & $\langle 2v\rangle$ & $\alpha+v\beta\mapsto (\alpha,\beta,\alpha+\beta,\alpha+\beta)$ & 6 & 2 \\
	 5 & $\langle 2\rangle$ & $\alpha+v\beta\mapsto (\alpha,\beta,\alpha+\beta,\alpha,\alpha+\beta)$ & 6 & 4 \\
	 \hline\hline
	\end{tabular}
\caption{Some examples of codes over $\mathbb{Z}_4.$}
\end{table}


\begin{thebibliography}{20}
\bibitem{abualrub}
Abualrub, T., Aydin, N., and Seneviratne, P., On $\theta$-Cyclic Codes over $\mathbb{F}_2+v\mathbb{F}_2,$ {\it Australasian Journal of Combinatorics}
vol. 54, 2012, pp. 115-126.

\bibitem{dougherty-ceng}
Cengellenmis, Y., Dougherty, S., and Abdullah, D., Codes over an Infinite Family of Rings with a Gray Map, {\it Design, Codes, and Cryptography}, 2013.

\bibitem{gao}
Gao, J., Skew Cyclic Codes over $\mathbb{F}_p+v\mathbb{F}_p$, {\it Journal of Applied Mathematics and Informatics} Vol. 31 No. 3-4, 2013, pp. 337-342.

\bibitem{gao2}
Gao, J., Fu F-W., and Gao, Y., Some classes of linear codes over $\mathbb{Z}_4+v\mathbb{Z}_4$ and their applications to construct good and new $\mathbb{Z}_4$-linear
codes, {\it Applicable Algebra in Engineering Communications, and Computing}, 2016, pp. 1-23

\bibitem{gao3}
Gao, J., Linear codes over $\mathbb{Z}_9+u\mathbb{Z}_9$: MacWilliams identity, self-dual codes, quadratic residue codes, and constacyclic codes, {\it preprint}.

\bibitem{irw}
Irwansyah, Barra, A., Muchtadi-Alamsyah, I., Muchlis, A., and Suprijanto, D., Skew-cyclic codes over $B_k,$ {\it Journal of Applied Mathematics and Computing} Vol. 57 Issue 1-2, 2017, pp. 69-84.

\bibitem{irw2}
Irwansyah, Barra, A., Muchtadi-Alamsyah, I., Muchlis, A., and Suprijanto, D., Codes over infinite family of algebras, {\it Journal of Algebra Combinatorics Discrete Structures and Applications} Vol. 4 No. 2, 2016, pp. 131-140.


\bibitem{wood}
Wood, J., Duality for Modules over Finite Rings and Applications to Coding Theory, {\em American Journal of Mathematics}, Vol \textbf{121}, 555-575, 1999.

\end{thebibliography}
\end{document}